\newtheorem{theo}{Theorem}[section]
\newtheorem{lem}{Lemma}[section]
\newtheorem{remark}{Remark}[section]
\newtheorem{propo}{Proposition}[section]
\newtheorem{ppte}{Property}[section]
\newtheorem{raps}{Rappel}[section]
\newtheorem{defi}{Definition}[section]
\newtheorem{con}{Consequence}[section]
\newtheorem{ase}{\underline{Assertion}}[section]
\newcounter{mnotecount}[section]
\renewcommand{\themnotecount}{\thesection.\arabic{mnotecount}}
\newcommand{\warn}[1]
{\protect{\stepcounter{mnotecount}}$^{\mbox{\footnotesize
$
\bullet$\themnotecount}}$ \marginpar{
\raggedright\tiny\em $\!\!\!\!\!\!\,\bullet$\themnotecount: {\bf
Warning:} #1} }
\author{\bf  Jean-Marc Mandeng$^{1, \dag}$ \\
{\small $^1$ Laboratory of   Mathematics,
Department of Mathematics and Computer Science,}\\
{\small  Faculty of Science, University of Douala, PO Box 24157  Douala, Cameroon}\\
{\small $^\dag$ Corresponding author: Tel. +237 693 19 04 43, Email: jeanmarcmandeng@gmail.com}}
\date{}
\title{\textsl{Modeling cholera dynamics under food Insecurity and environmental contamination: A Multi-Patch approach}} 
\begin{document}

\date{}
\maketitle

\begin{abstract}
We propose a multi-patch model of cholera transmission integrating environmental contamination, human mobility, and nutritional vulnerability. The population is stratified by food security status, and transmission occurs via human contact, bacteria in the environment and contaminated food. We derive the basic reproduction number $\mathcal{R}_0$ analyze the stability of the disease-free equilibria and show a forward bifurcation. Numerical simulations illustrate how food insecurity amplifies outbreak severity and mortality. The model highlights the role of spatial heterogeneity and socio-environmental factors in shaping cholera dynamics. Moreover, results show the impact of sinks inside starting epidemic.
\end{abstract}
\textbf{Keywords:} Multi-patch, Food security, Basic reproduction number, Stability, Forward biurcation. 
\section{Introduction}

Cholera remains a persistent public health concern in many developing regions, especially where access to clean water, adequate nutrition, and basic sanitation is limited \cite{WHOCholeraData}. In Sub-Saharan Africa, \textit{Vibrio cholerae} continues to cause recurrent epidemics, leading to significant morbidity and mortality, particularly among vulnerable populations \cite{Nelson2009, Che2020}. Food insecurity and malnutrition, exacerbated by poverty and environmental stressors such as droughts or floods, may amplify the risk and severity of cholera outbreaks \cite{FAO2021, UNICEF2022}. Poor nutritional status weakens the immune system and may increase susceptibility to enteric infections, creating a vicious feedback loop between undernutrition and infectious diseases \cite{Guerrant2013, Bhutta2008}.
Environmental factors also play a central role in cholera transmission. \textit{V. cholerae} can survive and proliferate in aquatic reservoirs, with its persistence influenced by temperature, salinity, rainfall, and contamination from human activities \cite{Lipp2002,Bertuzzo2010, deMagny2008}. In such contexts, the transmission of cholera is not only the result of direct human-to-human contact but also strongly coupled to environmental and social dynamics. Consequently, modeling approaches that incorporate both environmental transmission pathways and socio-economic vulnerabilities are essential to better understand disease persistence and control (\cite{Che2020}).
While numerous models of cholera transmission have been proposed \cite{codeco2001endemic, Hartley2006, tien2010multiple}, few have integrated the spatial heterogeneity of both population vulnerability and environmental exposure across multiple communities or regions. Even fewer have addressed the compounded effects of food insecurity on cholera dynamics. Yet, these two factors malnutrition and environmental contamination can interact in ways that strongly affect both the spread and severity of the disease, particularly in decentralized or poorly connected health systems.

This work is motivated by the following question: how does food insecurity influence cholera transmission and persistence at the population level, particularly when environmental contamination and spatial connectivity are accounted for? To address this, we propose a novel multi-patch compartmental model that captures the interplay between cholera transmission, food availability, and spatial structure. The model explicitly considers two distinct classes of susceptible individuals those in food security and those in food insecurity each with different vulnerabilities to infection and mortality. It also includes compartments for acutely infected individuals, chronically infected carriers, and an environmental reservoir representing water contamination.
Here, we consider a simple extension of the classical SIR model with water compartment W  by adding a foodborne transmission to obtain the resulting \textquotedblleft SCIWR-F\textquotedblright model, allowing for both person-person, person-water-person and person-food-person transmission. 
Our goal is to provide a theoretical and numerical framework that allows us to investigate the impact of food insecurity on the burden of cholera, identify threshold parameters (such as basic reproduction numbers, $\mathcal{R}_0$) associated with disease persistence or elimination, and  explore scenarios where food-related stress may give rise to complex dynamics such as bifurcation or multiple endemic equilibria.  Numerical simulations calibrated to World bank data both malnutrition and waterborne disease burden are high used to support the theoretical results and quantify the potential health losses due to the coupling of food stress and epidemic dynamics.
In particular, we show that food insecurity, acting as a vulnerability amplifier, may increase both the force of infection and the cholera-induced mortality, thereby altering the stability landscape of the system. The presence of multiple patches further reveals the importance of spatial feedbacks, with poorly connected regions acting as sources or sinks of infection. This study highlights the critical need to address food security not only as a development goal but also as a core component of epidemic resilience planning.

The paper is organized as follows. A detailed literature of the life cycle of \textit{V. cholerae} and the structure of food dynamics are  given in Section \ref{sec:cycle}. Also we present a new model in Section \ref{sec:old} and its properties who will be extend in Section \ref{sec:Model}, based on the compartmental model and its assumptions. Section \ref{sec:Analysis} contains the mathematical analysis, including the computation of $\mathcal{R}_0$ and equilibrium properties. In Section \ref{sec:Simulation}, we present numerical simulations based on data from Littoral (Douala) and its surrounding areas. Conclusions and perspectives are given in Section \ref{sec:perspectives}.
\section{Life cycle and biological background of \textit{Vibrio cholerae}}\label{sec:cycle}

\textit{Vibrio cholerae} is a Gram-negative, facultative anaerobic bacterium that thrives in aquatic environments, especially in estuarine, brackish, and coastal waters (\cite{LizarragaPartida2009}). Its life cycle is strongly influenced by ecological and environmental conditions such as temperature, salinity, and nutrient availability \cite{Colwell2004, Nelson2009}. The bacterium alternates between two major phases: a free-living stage in the environment and a parasitic stage within human hosts \cite{Nelson2009}.
In its environmental phase, \textit{V. cholerae} is capable of surviving in both planktonic and biofilm-associated forms. It adheres to biotic surfaces such as copepods, phytoplankton, and chitinous exoskeletons of aquatic invertebrates \cite{Lipp2002, Huq1995}, facilitating persistence in nutrient-limited waters. Environmental survival is further enhanced by the transition to a viable but non culturable state under unfavorable conditions such as low temperature or nutrient deprivation (\cite{Kaper1995,Lipp2002,Oliver2005}). The bacterium can remain viable in this state for extended periods and regain infectivity when conditions become favorable.

Upon ingestion of contaminated water or food, \textit{V. cholerae} enters the human gastrointestinal tract, where it must overcome gastric acidity and colonize the small intestine \cite{Oliver2005}. This colonization involves chemotaxis, mucin penetration, and the expression of key virulence factors such as the toxin-coregulated pilus and cholera toxin (\cite{Kaper1995}). Within the host, the bacterium multiplies rapidly, leading to massive fluid loss through diarrhea, which in turn contributes to environmental recontamination \cite{Huq1995}.
Excreted bacteria are often in a hyperinfectious state for several hours after being shed, with significantly increased infectivity compared to environmental strains \cite{codeco2001endemic,Merrell2002}. This hyperinfectious phase plays a critical role in epidemic amplification. Once reintroduced into the environment, the bacteria return to their aquatic phase, completing the cycle.
The environmental to human to environment loop, modulated by seasonality, temperature, and human behavior, defines the full transmission cycle of \textit{V. cholerae}. Understanding this cycle is essential to accurately model both the environmental persistence and the outbreak dynamics of cholera (\cite{Tien2010, Merrell2002}).

\section{A proposition of new models type: SIWR-F}\label{sec:old}
In a quest to improve existing models of cholera, we undertook a revision of these model with the aim of making them more realistic. Indeed, cholera transmission is not only waterborne through the ingestion of contaminated water or contact with bodily fluids from infected individuals but also foodborne, via the interaction with food/biomass, a route that current models have yet to capture. By introducing a term of the form $\beta_WW\left(1+\eta_F(F)\right)$, we propose a novel coupling between contamination dynamics and transmission risk, bridging ingestion-based exposure and environmental pathways. Moreover, if $F=0$, we obtain classical force of infection such as \cite{Tien2010}. Then, we start by propose this basic model \eqref{eq-sys} with Flowchart given in Fig.\ref{fig-1}, variables and parameters in Tab.\ref{par1} who will be extend in section \ref{sec:Model}.
   \begin{subequations}
 \begin{align}
 \dot{S}  & = \mu N -b_W(1+\eta_F(F))WS - b_ISI-\mu S \, , \\
  \dot{I}  & =b_W(1+\eta_F(F))WS + b_ISI -(\gamma+\mu) I \, , \\
\dot{W}&= \alpha I - \xi W \,  ,\\
\dot{R}&= \gamma I - \mu R \,  ,\\
\dot{F}&= rF\left(1-\frac{F}{K}\right) - a F  \, .
          \end{align}\label{eq-sys}
      \end{subequations}
  with : $N(t) = S(t)+I(t)+R(t) := N$, and $\eta_F(F) = \cfrac{F}{K}$.
 \begin{figure}[H] 
        \begin{center}
        \begin{tikzpicture}[>=stealth, node distance=1.5cm]
  \node (S) [very thick, black, ellipse,fill=green!30, inner sep=0.4cm, draw] {S};
  
  \node (I) [right=of S, very thick, black, ellipse,fill=red!30, inner sep=0.4cm, draw] {I};
  
 \node (R) [right=of I, very thick, black, ellipse,fill=green!30,inner sep=0.4cm, draw] {R};
 
 \node (W) [below=of I, very thick, black, rectangle,fill=blue!60, rounded corners = 3pt, inner sep=0.4cm, draw] {W};

 \node (F) at (2.8,2.2) [very thick, black, rectangle,fill=yellow!60, rounded corners = 3pt, inner sep=0.4cm, draw] {F};
 
 \draw[->, thick] (S)  to node[above, yshift=0.2cm] {$\mu$} (0.7, 1);
 \draw[->, thick] (I)  to node[above, yshift=0.2cm] {$\mu$} (3.6, 1);
 \draw[->, thick] (R)  to node[above, yshift=0.2cm] {$\mu$} (6.7, 1);
 \draw[->,thick,>=latex,] (S) edge[bend left] node[midway, below] {$b_I I$} (I);
  \draw[->,dashed,>=latex,thick] (I) -- node[right] {$\alpha$} (W);
  \draw[->,dashed,>=latex,thick] (F) -- node[left] {} (I);
  \draw[,dashed,thick] (F) -- node[left] {} (6,2.2);
    \draw[->,dashed,thick] (6,2.2) -- node[left] {} (R);
  \draw[,dashed,thick] (F) -- node[left] {} (0,2.2);
    \draw[->,dashed,thick] (0,2.2) -- node[left] {} (S);
  \draw[,thick] (S) -- node[left] {} (0,-1.3);
  \draw[->,thick] (2.8,-1.3) -- node[left] {} (2.8,-0.7);
  \draw[,thick] (0,-1.3) -- node[above] {$b_WW(1+\eta(.))$} (2.8,-1.3);
  \draw[->,dashed,thick] (W) -- node[above] {} (1.4,-1.3);
  \draw[->,thick] (W) -- node[above] {$\xi$} (5,-2.9);
  \draw[->,>=latex,] (I) edge[bend left] node[midway, below] {$\gamma$} (R);       
           \draw[->] (-1.8,0) -- node[above] {$\mu N$} (S);
     \draw[->, thick] (F) to[loop above] node[left] {$rF\left(1-\frac{F}{K}\right)$} (F);
  \draw[->,thick] (2.6,1.65) -- node[left] {$a$} (2.6,1);
       \end{tikzpicture}
       \end{center}
        \caption{ Flow-chart of the model \eqref{eq-sys}}
        \label{fig-1}
       \end{figure}
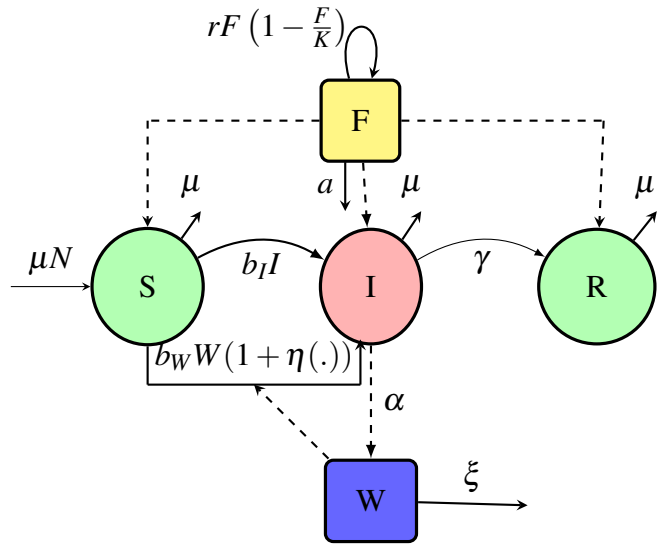     

\begin{table}[H] 
\caption{ Variables, Parameters and their biological meaning of system \eqref{eq-sys}.}
\begin{center}
\begin{tabular}{lp{4.8cm}lll}
\hline 
\textbf{Symbol} & \textbf{Biological meanings} & \textbf{Value} &\textbf{Unit} & \textbf{Source}\\ \hline 
\multicolumn{5}{l}{\textsl{Food-related parameters}\hfill}\\
$r$ & Growth rate of biomass/food & [0.1,0.5] & $Days^{-1}$ & Assumed \\
$K$ & Food holding capacity & $[2,3]\times 10^4$ &$calories^{-1}$ & \cite{UNICEF2022}\\
$a$ & Nutrition/degradation rate & [0.5,1] &$Days^{-1}$ & Assumed\\
\multicolumn{5}{l}{\textsl{Bacteria-related parameters}}\\
$\beta_W$ & Contact rate with \textit{V. cholerae} in the environment & [0.1,0.3] & $Days^{-1}$ & \cite{codeco2001endemic}\\
$\xi$ & Death rate of \textit{V. cholerae} in water reservoir & [0.1,1]& $Days^{-1}$ & \cite{Oliver2005} \\
$\alpha$ & Production of \textit{V. cholerae} by infected& $10^4-10^7$ & $cells.Days^{-1}$ & \cite{Oliver2005}\\
\multicolumn{5}{l}{\textsl{Human-related parameters}}\\
$\beta_I$ & Contact rate with \textit{V. cholerae} from human-to-human pathway & [0.05,0.15] & $Days^{-1}$ & Assumed\\ 
$\mu$ & Natural death rate of humans
 & $\frac{1}{(56\times 365)}$ & $Days^{-1}$ & WorldBank Data\\ 
 $\gamma$ & Recovery rate 
  & $[\frac{1}{7},\frac{1}{15}]$ & $Days^{-1}$ & WHO, 2022\\ 
\hline\hline
\multicolumn{5}{l}{\textbf{\textsl{Variables of system \eqref{eq-sys}}}}\\
\multicolumn{5}{l}{S(t)\hspace{1cm} Susceptible individual at time t}\\
\multicolumn{5}{l}{I(t)  \hspace{1cm} Infected individual at time t}\\
\multicolumn{5}{l}{R(t)\hspace{0.9cm} Recovered individual at time t}\\
\multicolumn{5}{l}{W(t)  \hspace{0.75cm} Pathogen concentration in water reservoir}\\
\multicolumn{5}{l}{F(t)\hspace{0.95cm} Biomass/Food density at time t}\\
\hline
\end{tabular}
\label{par1}
\end{center}
\end{table}

By giving $ \displaystyle s = \frac{S}{N}, i=\frac{I}{N}, r = \frac{R}{N}, w=\frac{\xi}{\alpha N}W$ and $f=\cfrac{F}{K}$, one has   
 \begin{align}\label{sys}
 \dot{s}  & = \mu  -\beta_W(1+f)ws - \beta_Isi-\mu s \, , \nonumber\\
\dot{i}  & =\beta_W(1+f)ws + \beta_Isi -(\gamma+\mu) i \, , \nonumber\\
\dot{w}&=  \xi(i-w) \,  ,\\
\dot{r}&= \gamma i - \mu r \,  ,\nonumber\\
\dot{f}&= \tilde{r}f(1-f) - a f \,  .\nonumber
          \end{align} 

In this study, we suppose that food is always presents in the environment, in other words, $\forall t \ge, F(t) > 0$. This condition ($r>a$) ensures that the food vector transmission component of cholera remains active at all times, which is the central focus of this work. Moreover, the case $F = 0$ reduces to the classical SIWR form already studied by \cite{tien2010multiple}.
 
   \begin{remark}
   Since the population is constant in this model, adding explicitly a term for food consumption by humans is of little interest.
   \end{remark}
      \subsection{Basic properties of the SIWR-F model}
      
Before proceeding to qualitative analysis, we first verify that the SIWR-F model is mathematically well-posed and biologically meaningful.
Since the right-hand sides of the system \eqref{sys} are composed of multivariate polynomials and hence $\mathcal{C}^\infty$, then the system is locally Lipschitz. Therefore, by the Cauchy-Lipschitz theorem, there exists a unique local solution for any initial condition in $\mathbb{R}^5_+$.

Moreover, we need to ensure that the  positive orthant is invariant under the flow of the system \eqref{sys}. Let $Z(t) = (s(t), i(t), w(t), r(t), f(t))$ denote the solution vector of the system \eqref{sys}.

\begin{lem}
Let $Z(0) \in \mathbb{R}^5_+$ with $s(0) > 0$ and $i(0), w(0), r(0), f(0) \geq 0$. Then, the solution $Z(t)$ satisfies $Z(t) \in \mathbb{R}^5_+$ for all $t \geq 0$, and $s(t) > 0$ for all $t \geq 0$.
\end{lem}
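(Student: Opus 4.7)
The plan is to combine a boundary-tangency check along the coordinate hyperplanes of $\mathbb{R}^5_+$ with an explicit integrating-factor representation for $s$. First I would verify that on each face $\{x_k=0\}\cap\mathbb{R}^5_+$ the corresponding component of the vector field points inward or is tangential: on $\{s=0\}$ one reads $\dot s=\mu>0$; on $\{i=0\}$, $\dot i=\beta_W(1+f)ws\ge 0$; on $\{w=0\}$, $\dot w=\xi i\ge 0$; on $\{r=0\}$, $\dot r=\gamma i\ge 0$; and on $\{f=0\}$, $\dot f=0$. Combined with the local Lipschitz property already noted, Nagumo's sub-tangential condition then yields positive invariance of the closed orthant on the maximal interval of existence.

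Second, to upgrade $s(t)\ge 0$ to the strict inequality $s(t)>0$, I would rewrite the first equation as $\dot s=\mu-g(t)\,s$ with
\[
g(t):=\beta_W\bigl(1+f(t)\bigr)w(t)+\beta_I\,i(t)+\mu,
\]
which is continuous and non-negative on the interval of existence because the orthant is invariant. The variation-of-constants formula then gives
\[
s(t)=s(0)\,e^{-\int_0^t g(u)\,du}+\int_0^t \mu\,e^{-\int_\tau^t g(u)\,du}\,d\tau,
\]
both terms being non-negative and the first strictly positive by the hypothesis $s(0)>0$; hence $s(t)>0$ for all $t\ge 0$.

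The only delicate point, and not a serious obstacle, is that the boundary tangency argument at $\{x_k=0\}$ uses the already-established non-negativity of the \emph{other} coordinates, so one must either invoke Nagumo's theorem directly or else unfold a cascaded Gronwall-type estimate in the order $f\to s\to w\to i\to r$, using that each $\dot x_k$ can be written as $A_k(t)-B_k(t)\,x_k$ with $A_k,B_k\ge 0$ once the preceding variables are known to be non-negative. For a polynomial right-hand side like the present one this cascade is routine, and no comparison principle or fixed-point machinery is needed.
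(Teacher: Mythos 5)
Your proposal is correct and follows essentially the same route as the paper's proof, which also argues by checking the sign of each component of the vector field on the corresponding boundary face (the paper does this via a first-touching-time contradiction and simply asserts strict positivity of $s$, whereas you make the argument rigorous with Nagumo's condition and an explicit variation-of-constants formula). The only minor imprecision is in your fallback cascade $f\to s\to w\to i\to r$: since $\dot i$ depends on $w$ and $\dot w$ on $i$, the pair $(i,w)$ must be treated jointly as a cooperative linear subsystem rather than sequentially, but this does not affect your primary Nagumo-based argument.
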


\begin{proof}
The right-hand side of each equation in system \eqref{sys} is such that if any variable reaches zero, its derivative becomes non-negative (except possibly for $i$ and $w$, which depend on infections, but are fed by $s > 0$ and the initial infection).

For $s(t)$: if $s(t_1) = 0$ at some $t_1 > 0$, then from equation (2a),
\[
\dot{s}(t_1) = \mu > 0,
\]
contradicting the assumption that $s(t)$ reaches zero. The same reasoning applies to $f(t)$ and $w(t)$. Therefore, all state variables remain nonnegative and $s(t)$ remains strictly positive.
\end{proof}

We now show that the solution is uniformly bounded for all $t \geq 0$.

\begin{lem}
The solution $Z(t)$ of system \eqref{sys} is bounded $\forall t \geq 0$ and if initial conditions satisfy $X(0) \in \Omega$, where
\begin{equation}
\Omega = \left\{ X \in \mathbb{R}_+^{5} \;\middle|\;
\begin{array}{l}
0 \leq s(t)+i(t)+r(t) \leq 1, \\
0 \leq w(t) \leq 1, \\
0 \leq f(t) \leq 1.
\end{array}
\right\}.
\end{equation}
Then, the region $\Omega$ is positively invariant under the flow of system \eqref{sys}.
\end{lem}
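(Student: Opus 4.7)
The plan is to establish the three bounds defining $\Omega$ one at a time, leveraging the positivity already given by the previous lemma, and exploiting the fact that on each boundary face of $\Omega$ the relevant component of the vector field points inward.

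First, I would introduce the total human fraction $N(t):=s(t)+i(t)+r(t)$ and add the corresponding three equations of \eqref{sys}. The bilinear infection terms $\pm\beta_W(1+f)ws$ and $\pm\beta_I si$ cancel, as does the $(\gamma+\mu)i$ against $\gamma i$ from $\dot r$, leaving
\begin{equation*}
\dot N = \mu(1-N).
\end{equation*}
Integrating gives $N(t)=1+(N(0)-1)e^{-\mu t}$, so $N(0)\in[0,1]$ immediately implies $N(t)\in[0,1]$ for every $t\geq 0$. Combined with the nonnegativity of $s,i,r$ from the previous lemma, this confirms the first constraint defining $\Omega$.

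Next, I would handle $w$ and $f$ by a standard boundary/comparison argument. For $w$, the already established bound $i(t)\leq N(t)\leq 1$ yields the differential inequality $\dot w\leq \xi(1-w)$, so that whenever $w(t_0)=1$ we get $\dot w(t_0)\leq 0$, preventing escape above $1$; together with $w\geq 0$ from the previous lemma this gives $w(t)\in[0,1]$. For $f$, evaluating the right-hand side on the face $f=1$ gives $\dot f=-a<0$, and on the face $f=0$ gives $\dot f=0$, so the interval $[0,1]$ is invariant for the scalar equation governing $f$. (A tangent-cone/Nagumo-type observation suffices here; no explicit integration is needed.)

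Putting the three pieces together, the outward-pointing component of the vector field vanishes or is nonpositive on every relevant boundary face of $\Omega$, so $\Omega$ is positively invariant; boundedness of $Z(t)$ on $\mathbb R_+^5$ for initial data in $\Omega$ is an immediate consequence. The main obstacle I anticipate is purely bookkeeping: one must verify that the cross terms in the $s$, $i$, $r$ equations cancel exactly (so that $\dot N$ is affine in $N$ and independent of $w,f$), and that the comparison for $w$ is legitimate because the bound on $i$ is available uniformly in $t$. No delicate estimate is required, since each of the three invariance conditions reduces to a one-line sign check on the boundary of $\Omega$.
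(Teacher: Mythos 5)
Your proposal is correct and follows essentially the same route as the paper's (much terser) proof: conservation of the normalized human population giving $\dot N=\mu(1-N)$, a comparison argument for $w$ using $i\le 1$, and a boundary sign check for the logistic $f$ equation. You have simply supplied the details that the paper's one-line appeal to ``standard comparison arguments'' leaves implicit.
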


\begin{proof}
Since the total population is constant and the food dynamics are logistic, we can use standard comparison arguments. Indeed, $s(t) + i(t) + r(t) \leq 1$, and $f(t) \leq 1$ for the normalized model. The bacteria compartment $w(t)$ satisfies the same reasoning.
\end{proof}

These properties confirm that the SIWR-F model is well-posed and suitable for epidemiological analysis.
 \begin{defi}
 The basic reproduction rate is usually the average number of newly infected that a vector can produce in a population made up entirely  of susceptible individuals during it period of infection without any control.  
 \end{defi}
Using technique provide by \cite{Driessche2002}, we write the next generation matrix at the disease free equilibrium as $FV^{-1}$, where the $ij$ entry of the matrix $F$ is the rate at which infected individuals in compartment j produce new infections in compartment $i$, and the $jk$ entry of $V^{-1}$ is the average duration of stay in compartment j starting from k.

From \eqref{eq-sys}, we have

\begin{equation}
F = \begin{pmatrix}
\beta_I & \beta_W(1+f^*) \\
0 & 0
\end{pmatrix}, \quad \text{with} \quad f^*=1-\frac{a}{r} \quad \text{and} \quad V^{-1} = \begin{pmatrix} \frac{1}{\gamma+\mu} & 0 \\ \frac{1}{\gamma+\mu} & \frac{1}{\xi} \end{pmatrix}
\end{equation}
Then, the basic reproductive number is given by:
\begin{align}\label{eq-R0}
\mathcal{R}_0 &= \rho(FV^{-1})\nonumber\\
&= \cfrac{\beta_I+\beta_W(1+f^*)}{\gamma+\mu}.
\end{align}

To ensure that the effective control of the disease is not dependent on the initial size of \textit{V. cholerae} concentration, a global stability result must be established for the disease free equilibrium point $x^0=(1,0,0,f^*)$. 
\begin{raps}
It is easy show that when $i(t)\to 0, w(t)\to 0.$
\end{raps}
\begin{lem}\label{global}
If, $\mathcal{R}_0 \le 1$, system \eqref{eq-sys} has a disease free equilibrium who is globally asymptotically stable and unstable when $\mathcal{R}_0 > 1$.
\end{lem}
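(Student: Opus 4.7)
The plan is to exploit the cascade structure of system \eqref{sys}. Because the food equation is decoupled from every other compartment, and the standing assumption $\tilde{r}>a$ turns it into a classical logistic equation with globally attracting positive equilibrium $f^{*}=1-a/\tilde{r}$, one has $f(t)\to f^{*}$ for every positive initial datum. Hence the subsystem in $(s,i,w,r)$ is asymptotically autonomous, with autonomous limit obtained by freezing $f$ at $f^{*}$. My strategy is first to establish global asymptotic stability of the DFE $(1,0,0,0)$ of this limit system by a linear Lyapunov function, and then to transfer the conclusion to the full system via Thieme's theorem on asymptotically autonomous semiflows.

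For the Lyapunov function on the limit system, I would try
\[
V(i,w) \;=\; i \;+\; \frac{\beta_{W}(1+f^{*})}{\xi}\, w.
\]
Differentiating along limit trajectories and using $s\le 1$ (which follows from the boundedness lemma) should yield
\[
\dot V \;\le\; \bigl[\beta_{I}+\beta_{W}(1+f^{*})-(\gamma+\mu)\bigr]\, i \;=\; (\gamma+\mu)\bigl(\mathcal{R}_0-1\bigr)\, i.
\]
When $\mathcal{R}_0\le 1$, this is nonpositive, so $V$ is nonincreasing. LaSalle's invariance principle applies on the compact positively invariant region $\Omega$: the largest invariant subset of $\{\dot V=0\}$ requires $i\equiv 0$, which then forces $\dot w=-\xi w$ (hence $w\to 0$, in agreement with the Rappel preceding the lemma), $\dot r=-\mu r$ (hence $r\to 0$), and finally $\dot s=\mu-\mu s$ (hence $s\to 1$). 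This gives global asymptotic stability of the DFE for the limit system.

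To transfer the conclusion to the original system I would invoke Thieme's theorem on asymptotically autonomous semiflows: since $\Omega$ is positively invariant and forward-bounded and the autonomous limit admits the globally attracting equilibrium just obtained, the $\omega$-limit set of every trajectory of \eqref{sys} is contained in $\{(1,0,0,0,f^{*})\}$. Local stability in the strict case $\mathcal{R}_0<1$, as well as local instability whenever $\mathcal{R}_0>1$, then comes automatically from the next-generation theorem of \cite{Driessche2002}, since that spectral criterion compares $\rho(FV^{-1})=\mathcal{R}_0$ with $1$; combining local stability with global attractivity yields global asymptotic stability.

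The main obstacle I anticipate is the transient phase of $f$. If $f(0)>f^{*}$, then $f(t)$ can exceed $f^{*}$ on an initial time window, so the coefficient $\beta_{W}(1+f(t))$ cannot be bounded by $\beta_{W}(1+f^{*})$ uniformly in $t$, and a naive single Lyapunov function on the full five-dimensional system does not close directly. This is exactly why the asymptotically autonomous framework is needed: it decouples the exponential decay of $f-f^{*}$ from the infection dynamics. The borderline case $\mathcal{R}_0=1$ also demands extra care, but LaSalle still delivers the conclusion because the invariant subset where $\dot V=0$ collapses to the DFE alone.
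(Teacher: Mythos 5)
Your proof is correct, but it follows a genuinely different route from the paper's. The paper applies a single Lyapunov function $V(s,i)=s-1-\ln(s)+i$ directly to the full system, invoking the bounds $s\le 1$ and $f\le f^*$ and then LaSalle; it never introduces a $w$-term in the Lyapunov function, and in computing $\dot V$ it writes $\dot i$ as $i\bigl(s(\beta_W(1+f)+\beta_I)-(\gamma+\mu)\bigr)$, which silently replaces the water-mediated term $\beta_W(1+f)ws$ by $\beta_W(1+f)is$ --- a substitution valid at equilibrium ($w^*=i^*$) but not along arbitrary trajectories. Your linear functional $V(i,w)=i+\frac{\beta_W(1+f^*)}{\xi}w$ handles this properly: the coefficient is chosen exactly so that the $\beta_W(1+f^*)ws$ term combines with $\frac{\beta_W(1+f^*)}{\xi}\dot w$ to leave only $\beta_W(1+f^*)w(s-1)\le 0$, and the computation closes cleanly to $\dot V\le(\gamma+\mu)(\mathcal{R}_0-1)i$. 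Likewise, the obstacle you flag --- that $f(t)$ may exceed $f^*$ during a transient, so $\beta_W(1+f)$ is not uniformly dominated by $\beta_W(1+f^*)$ --- is real and is precisely the step the paper's proof asserts without justification ("$f\le f^*$ because the system is in proportions"); your detour through the asymptotically autonomous limit system and Thieme's theorem is what actually repairs it. The price of your approach is the extra machinery (convergence of the logistic $f$-equation, the asymptotically autonomous comparison, and a separate argument for the $r$ and $s$ limits after $i,w\to 0$), whereas the paper's argument is shorter and self-contained but leaves those two gaps. One small point to tighten: at $\mathcal{R}_0=1$ the next-generation theorem gives only a zero dominant eigenvalue, so Lyapunov stability in that borderline case should be drawn from the Lyapunov function itself rather than from the linearization.
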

\begin{proof}
The local asymptotic stability follows clearly using \cite{Driessche2002} (Theorem 2).
Consider the following Lyapunov function using \cite{Korobeinikov2002} reasoning.
\begin{equation}
V(s,i) = s-1-ln(s)+i, \quad s,i\ge 0.
\end{equation}
Since $s\le 1, f\le f^*$ ,because the system \eqref{sys} is in proportions. Yields, 
\begin{align}
\frac{dV}{dt}\lvert_{DFE} &= \dot{s}\left(1-\frac{1}{s}\right) + \dot{i},\nonumber\\
&=  (\mu  -\beta_W(1+f)ws - \beta_Isi-\mu s)\left(1-\frac{1}{s}\right) + i(s(\beta_W(1+f) + \beta_I) -(\gamma+\mu)), \nonumber\\
&\le (\mu - \mu s)\left(1-\frac{1}{s}\right) + i(\gamma+\mu)\left(\frac{\beta_W(1+f^*) - \beta_I}{\gamma+\mu} -1\right), \nonumber\\
&\le -\frac{\mu}{s}(s-1)^2 + i(\gamma+\mu)(\mathcal{R}_0 - 1), \nonumber\\
&\le 0.
\end{align}
Since, $\frac{dV}{dt}\lvert_{DFE} = 0$ only at the disease free equilibrium, thus the largest invariant set \\$\Gamma = \left\{(s,i)\in [0,1]\times[0,1]: \frac{dV}{dt}\lvert_{DFE} = 0\right\}$ is the singleton $(1,0)$. By \cite{lasalle1976stability} invariant principle, one has that the disease free equilibrium is globally asymptotically stable in $\Gamma$. This achieves the proof.
\end{proof}
      \subsection{Stability of the endemic equilibrium}
      
 Herein, we study the stability of an endemic equilibrium whenever it exists denote $x^* = (s^*, i^*, w^*, f^*)$ of system \eqref{sys}, where
 
 \begin{equation}
 (s^*, i^*, w^*, f^*) = \left(\frac{1}{\mathcal{R}_0}, \frac{\mu}{\gamma+\mu}(1-s^*), i^*, 1-\frac{a}{r}\right).
 \end{equation}and $\mathcal{R}_0$ define in Eq.\ref{eq-R0}.
 
 We compute the Jacobian matrix \eqref{sys} evaluated at $x^*$:
 
\begin{equation}
J\arrowvert_{x^*} = 
\displaystyle \left(\begin{matrix}- \frac{\mu}{s^*} & - \beta_{I} s & -  \beta_{W} s^*(1+f^*) & - \beta_{W} s^* w^*\\\mu \left(-1 + \frac{1}{s^*}\right) & - \beta_W(1+f^*)s^* & \beta_{W} s^*(1+f^*) & \beta_{W} s^* w^*\\0 & \xi & - \xi & 0\\0 & 0 & 0 & r(1-2f^*)-a\end{matrix}\right).
\end{equation}
because, at equilibrium ($\dot{s}=0$),
\[\mu = s^*(\beta_W(1+f^*)w^* + \beta_Ii^* + \mu) \]
This implies that
\begin{equation}
\frac{\mu}{s^*} = \beta_W(1+f^*)w^* + \beta_Ii^* + \mu
\end{equation}
Moreover, for the case ($\dot{i}=0$), we have
\[\beta_W(1+f^*)s^*w^* + \beta_Is^*i^* - (\gamma + \mu)i^* = 0,\]
Using the fact that at equilibrium, $w^*=i^*$, one has
\[(\beta_W(1+f^*)s^* + \beta_Is^*)i^* = (\gamma + \mu)i^*,\]
Then
\begin{equation}
\gamma + \mu = \beta_W(1+f^*)s^* + \beta_Is^*,
\end{equation}
which implies
\begin{equation}
\beta_I s^* - (\gamma+\mu) = \beta_I s^* - \beta_W(1+f^*)s^* - \beta_Is^* = - \beta_W(1+f^*)s^*.
\end{equation}
The characteristic polynomial is $\lambda^4 + a_1\lambda^3 + a_2\lambda^2 + a_3\lambda + a_4,$ where
 \begin{align}
& a_1  = \beta_W(1+f^*)s^* + \frac{\mu}{s^*} + \xi - (-r + a) \, , \nonumber\\
&  a_2 = \beta_W(1+f^*)(\mu -(-r+a)s^*) + \mu\beta_I(1-s^*) + \frac{\mu}{s^*}(\xi-(-r+a)) -(-r+a)\xi\, , \nonumber\\
& a_3 = \mu\left[(1-s^*)\bigg(\beta_W(1+f^*)\xi + \beta_I(\xi - (-r + a))\bigg) - (-r+a)\left(\beta_W(1+f^*) + \frac{\xi}{s^*}\right)\right] \,  ,\nonumber\\
& a_4= \xi\mu(r-a)(1-s^*)(\beta_I+\beta_W(1+f^*)) \,  .
          \end{align}
Since $\mathcal{R}_0 > 1$ and $s^* = \frac{1}{\mathcal{R}_0}$, we have $ (1- s^*) > 0$, so $a_i> 0, \forall i=1,\ldots,4$. The Routh-Hurwitz criteria give that the endemic equilibrium is stable if $a_1, a_4 > 0$ and $a_1a_2-a_3>0, a_1a_2a_3-(a_1)^2a_4-(a_3)^2>0$. Clearly, we prove that $a_1a_2-a_3>0$. Indeed: 
\begin{align}
a_1a_2 - a_3 &= a_2\left(\beta_W(1+f^*)s^* + \frac{\mu}{s^*}\right) +a_2(\xi - (-r + a))-a_3 ,\nonumber\\
&= a_2\left(\beta_W(1+f^*)s^* + \frac{\mu}{s^*}\right) + \beta_{W} \mu \xi(1+f^*)s^*  + (r-a)^{2} \left(\beta_{W}(1+f^*)s^* + \frac{\mu}{s^*} + \xi\right) \nonumber\\
& + \xi(r-a) \left(\beta_{W}(1+f^*)s^*  + \frac{\mu}{s^*} + \xi\right)+ \frac{\mu \xi^{2}}{s^*} ,\nonumber\\
& > 0.
\end{align}
Thus, local stability of the endemic equilibrium is determined by the sign of the last condition
\begin{align}
C &= a_1(a_2a_3-a_1a_4)-a_3^2,\nonumber\\
&= a_1\bigg((1-s^*)\xi\mu(r-a) \frac{\left(\beta_{W}(1+f^*)s^{*2}+\mu+ s^*(\xi -  (-r+a))\right) \left(\beta_I+\beta_{W}(1+f^*)\right)}{s^*} + \nonumber\\
&( \beta_{W}(1+f^*)\mu  - \beta_{W}(1+f^*)s^{*} (-r+a) +\beta_{I} \mu(1-  s^{*})  - \xi (-r+a) +\frac{\mu (\xi-(-r+a)) }{s^*}a_3\bigg) \nonumber\\
& - a_3^2 > 0.
\end{align}
Thus, the endemic equilibrium in system \eqref{sys} is locally stable whenever it exists and we prove that it persists inside the population when it appears using \cite{thieme1992persistence} in Theorem \ref{th:persistence}.

\begin{theo}\label{th:persistence}
If $\mathcal{R}_0>1$, then the disease is {\em uniformly persistent strong} in the following sense: there exists \(\varepsilon>0\) such that for every solution with \(i(0)>0\),
\[
\liminf_{t\to\infty} i(t)\ge \varepsilon.
\]
\end{theo}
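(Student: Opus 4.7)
The plan is to apply the persistence framework of \cite{thieme1992persistence}. I would take as the state space the compact positively invariant set $\Omega$ from the previous lemma, and decompose it as $X_0=\{Z\in\Omega : i>0\}$ and $\partial X_0=\{Z\in\Omega : i=0\}$. Both sets are forward invariant: if $i(0)=0$, then $\dot w=-\xi w$ forces $w(t)\to 0$ and in turn $\dot i=\beta_W(1+f)ws\to 0$, so $i$ stays at $0$; conversely if $i(0)>0$, a Gronwall-type argument on $\dot i\ge -(\gamma+\mu)i$ gives $i(t)>0$ for all $t>0$. Since the right-hand side is $\mathcal{C}^\infty$ on the compact set $\Omega$, the induced semiflow $\Phi_t$ is asymptotically smooth and dissipative.

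Next I would describe the dynamics on $\partial X_0$. With $i\equiv0$ and $w(t)\to0$, the $s$-equation reduces to $\dot s=\mu(1-s)$, hence $s(t)\to1$; $\dot r=-\mu r$ gives $r(t)\to 0$; and the logistic equation for $f$ (with $r>a$) drives $f(t)\to f^*=1-a/r$. Therefore the only $\omega$-limit set in $\partial X_0$ is the disease-free equilibrium $E_0=(1,0,0,0,f^*)$, which is globally attracting on the boundary and trivially acyclic and isolated in $\partial X_0$.

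The crux is proving \emph{weak uniform repellency} of $E_0$ in $X_0$: there exists $\varepsilon_0>0$ such that $\limsup_{t\to\infty}\|\Phi_t(x)-E_0\|>\varepsilon_0$ for every $x\in X_0$. I would argue by contradiction. If some trajectory with $i(0)>0$ converged to $E_0$, then for any $\delta>0$ eventually $s(t)\ge 1-\delta$ and $f(t)\ge f^*-\delta$, so the infected subsystem satisfies the linear comparison
\[
\dot i \;\ge\; [\beta_I(1-\delta)-(\gamma+\mu)]\,i \;+\; \beta_W(1+f^*-\delta)(1-\delta)\,w, \qquad \dot w=\xi(i-w).
\]
The unperturbed matrix ($\delta=0$) is exactly $F-V$ from the next-generation decomposition; since $\mathcal{R}_0=\rho(FV^{-1})>1$, it has a simple dominant eigenvalue $\lambda_0>0$ with a strictly positive left eigenvector $(\theta_1,\theta_2)$. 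By continuity, for $\delta$ sufficiently small the perturbed matrix still admits a positive eigenvalue $\lambda(\delta)>\lambda_0/2$ with a positive left eigenvector $(\theta_1(\delta),\theta_2(\delta))$, so the test function $L(i,w)=\theta_1(\delta)i+\theta_2(\delta)w$ satisfies $\dot L\ge(\lambda_0/2)L$ along the trajectory. This forces $L(t)\to\infty$, contradicting boundedness of $i$ and $w$ in $\Omega$. Hence weak uniform persistence holds.

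Finally, since the semiflow is asymptotically smooth, the boundary flow has the single isolated acyclic compact invariant set $\{E_0\}$ whose stable manifold does not meet $X_0$ (by the weak repellency just proved), Theorem 4.6 of \cite{thieme1992persistence} upgrades weak persistence to strong uniform persistence, yielding $\varepsilon>0$ with $\liminf_{t\to\infty}i(t)\ge\varepsilon$ uniformly over $i(0)>0$. The main obstacle is the weak repellency step: one must ensure that the spectral perturbation argument produces a single $\varepsilon_0$ working for \emph{all} interior trajectories, which is what makes the construction of the uniformly positive left eigenvector, rather than a pointwise Lyapunov estimate, the essential ingredient.
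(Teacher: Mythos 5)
Your proposal is correct and follows essentially the same route as the paper's Appendix A: the equivalence between $\mathcal{R}_0>1$ and a positive principal eigenvalue of the $(i,w)$-linearization at the DFE, a cooperative comparison argument near the DFE to get weak uniform persistence, and Thieme's acyclicity theorem to upgrade to strong persistence — your positive left eigenvector plus contradiction is just a cleaner packaging of the paper's entrywise matrix-exponential lower bound. One slip you share with the paper: $\{i=0\}$ alone is not forward invariant (if $i=0$ but $w>0$ then $\dot i=\beta_W(1+f)ws>0$), so the boundary set in the persistence framework should be $\{i=w=0\}$, which is where the rest of your boundary analysis in fact takes place.
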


\begin{proof}
The proof is in Appendix {\color{blue} A}.
\end{proof}

 \begin{figure}[H]
 \centering
 \includegraphics[width=0.502\textwidth]{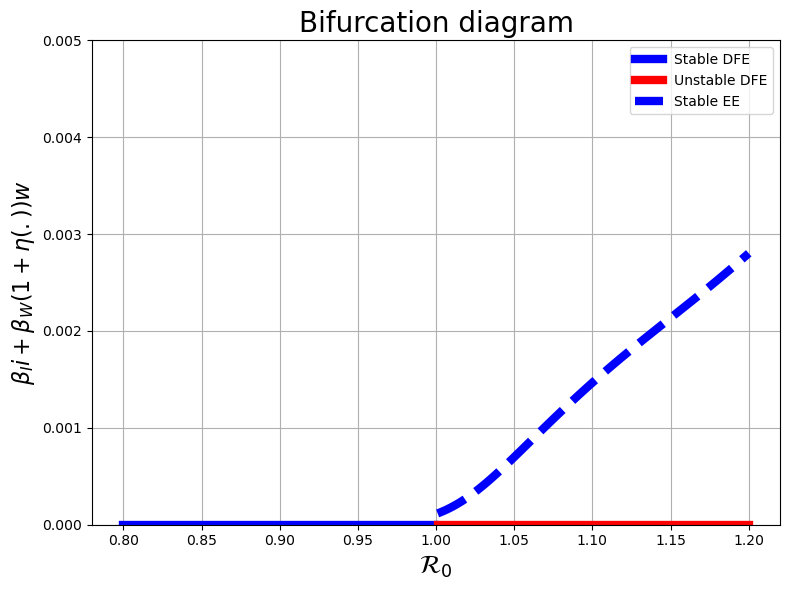}
 \includegraphics[width=0.49\textwidth]{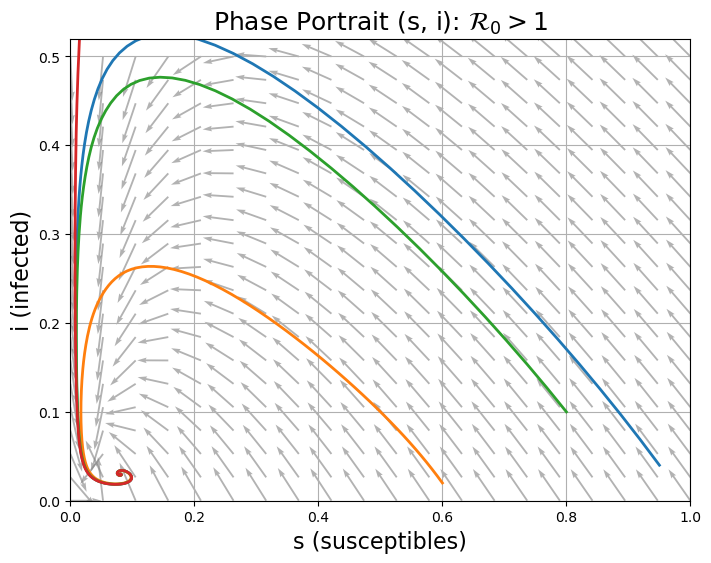}
 \caption{Bifurcation Diagram at left and Phase Portrait at right of system \eqref{eq-sys}.}
  \label{bifurcation}
 \end{figure}  

The left panel in Fig.\ref{bifurcation} displays a forward bifurcation: as the basic reproduction number  $\mathcal{R}_0$ crosses the threshold of 1, the disease-free equilibrium loses stability and an endemic equilibrium emerges. This confirms the analytical result (Lemma \ref{global}) that  is a critical threshold for cholera persistence. The right panel presents the phase portrait of the system, showing the trajectories in the plane. It demonstrates that trajectories starting from different initial conditions converge to the endemic equilibrium (if $\mathcal{R}_0 > 1$). The vector field highlights the direction of movement, reinforcing the local and global stability properties derived in the present section.


  \subsection{Comparison of SIR, SIWR and SIWR-F dynamics}
  Fig \ref{calibration} compares the yearly proportion of infected individuals reported in Cameroon from 1976 to 2000 (\cite{WHOCholeraData} data, normalized) with simulations from three models: the classical SIR (cf. \cite{AndersonMay1991}), the SIRW (cf. \cite{Tien2010}), and the proposed SIRW-F model incorporating both environmental and foodborne pathways. Calibration was performed using the parameters in Table~\ref{par1}, and the root mean square error (RMSE) was computed for each model.
  \begin{figure}[H]
  \centering
  \includegraphics[width=\textwidth]{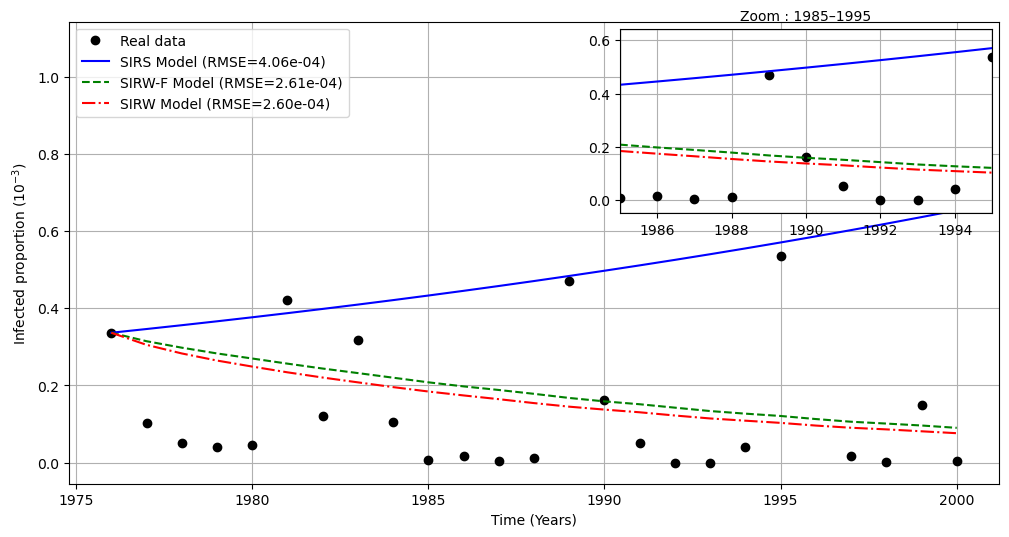}
  \caption{Comparison between SIR, SIWR and SIWR-F under cholera reported cases in Cameroon during 1976-2000 according to \cite{WorldBankData} and \cite{WHOCholeraData} Data.}
   \label{calibration}
  \end{figure}   

Although the SIRS model shows a slightly lower RMSE ($\approx 4e-4$) compared to the environmental models (SIWR and SIWR-F), it predicts an unrealistically fast increase of the disease ($\mathcal{R}_0 = 1.077>1$). In contrast, both environmental models reproduce the persistent low-level transmission observed in the data, with SIWR-F ($\mathcal{R}_0 = 0.78<1$) showing a slightly better fit in the 1985-1995 zoomed period. This improvement stems from the inclusion of foodborne transmission, which sustains the infection at a realistic endemic level and matches the long-term persistence pattern.
Moreover, the close RMSE values between SIWR (($\mathcal{R}_0 = 0.808<1$)) and SIWR-F indicate that adding the foodborne pathway does not compromise the overall statistical fit, while enhancing the model biological realism and policy relevance. These results support the use of environmentally coupled models, particularly SIWR-F, for understanding and predicting cholera dynamics in endemic settings like Cameroon.

  \begin{figure}[H]
  \centering
  \includegraphics[width=0.49\textwidth]{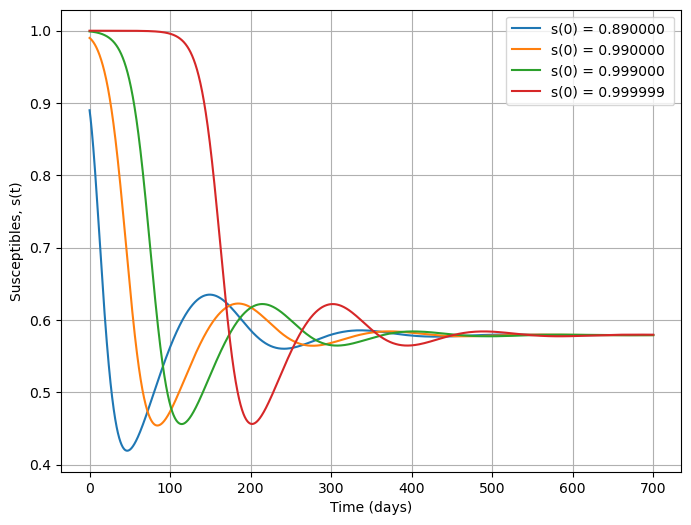}
  \includegraphics[width=0.49\textwidth]{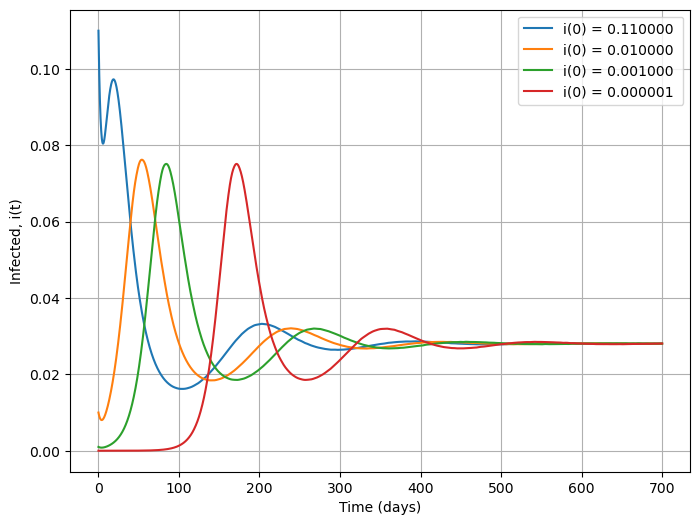}
  \label{infecte}
  \caption{Simulation of system \eqref{sys} using various initial conditions when $\mathcal{R}_0 = 1.726 > 1$. We present dynamics of susceptible and infected during an epidemic.}
  \end{figure}  
   
 \section{An extension : Multi-patch framework}\label{sec:Model}
 
However, this new modifications inside system \eqref{eq-sys} to correct the previous models (\cite{AndersonMay1991}, \cite{Eisenberg2013}, \cite{Bertuzzo2010}, \cite{Tien2010}, \cite{Che2020}...). It do not taking into account some others realities like immunity losses, mortality due to the disease, migration, presence of asymptomatic carriers, variability size of the population, non-linearity of diseases transmissions, and vulnerability of susceptible due to poverty, under-nourrished or others. So, the goal of this section is to solve them.

\subsection{Model Formulation}
We need the following hypotheses:
\begin{ase}

 \begin{itemize} \item[\color{white}]
 \item The foodborne transmission channel mediates environmental contamination by bacteria (\textit{V. cholerae}), thus capturing both indirect environmental and ingestion-based exposure routes. Then, 
 $\lambda(.) = \beta_B\cdot\cfrac{F^c_i}{\Delta_1 + F^c_i} + \beta_H\cdot\cfrac{I_i + \varepsilon C_i}{N_i^{q_p}}, \;$ where $\lambda(.)$ is the force of infection with $\eta_F = \beta_F\cdot\cfrac{F_i}{\Delta_2 + F_i}$ and $q_p \in \{0, 1\}$ i.e inside the principal node, the transmission look the standard incidence function ($q_p = 1$) and the mass action law for their satellites ($q_p=0$).
 \item $ \phi_i = \left\{ \begin{array}{ll} 
 \max (0, \delta(K-ea_iF_i))& \mbox{i = 1}\\ 
 \phi_2 & \mbox{i = 2} \end{array} 
 \right. $. For the transition to vulnerable individuals, it is assumed that vulnerability appears when the amount of nutriments absorbed (converted into biomass) is below than a threshold K (2000 calories.day$^{-1}$), i.e $ea_iF_i<K$, or even $K-ea_iF_i>0$. Individuals become vulnerable at rate $\phi_i$, with $\delta$ the apparition rate of vulnerability.
 \end{itemize} 
\end{ase}
Here, we develop a mathematical model that describes the multi-patch dynamics of cholera transmission in a population subject to nutritional vulnerability and contaminated food. The model is formulated using a compartmental framework, and explicitly considers both spatial heterogeneity (through patch structure) and multiple transmission routes (direct and indirect). Individuals move between compartments according to their epidemiological status and are exposed to infection either via direct contact with infectious individuals or through ingestion of contaminated food.

The model rests on the following assumptions:
\begin{itemize}

\item[(i)] Individuals who are food insecure or nutritionally vulnerable are more likely to become infected and suffer severe outcomes if exposed to cholera.
\item[(ii)] Individuals within a given patch are homogeneously mixed, but inter-patch coupling exists via human mobility and environmental contamination.
\item[(iii)] Food contamination arises from bacterial load in the environment and is modulated by local food availability and hygiene conditions.
\item[(v)] We assume that food and water ingestion occur jointly during meals, as is common in many societies worldwide. Therefore, waterborne and foodborne exposures are combined into a single ingestion-based transmission route. This simplifies the model compared to classical cholera frameworks where water is treated separately.
\item[(vi)] Food contamination is assumed to result primarily from environmental exposure to \textit{V. cholerae}, reflecting the dominant route of contamination observed in cholera-endemic settings. Direct human contamination (e.g., via food handling) is not explicitly modeled, as its contribution is generally secondary compared to water-related pathways.
\end{itemize}

Each spatial unit (or patch) contains human and environmental compartments. At time , individuals in patch  are divided into the following disjoint classes: the Susceptible individuals with adequate nutrition ($S_{1,i}(t)$), the Susceptible individuals under nutritional stress or vulnerable ($S_{2,i}(t)$), Symptomatic infectious individuals ($I_i(t)$) , Asymptomatic carriers ($C_i(t)$) and the Recovered individuals with temporary immunity ($R_i(t)$).
Environmental compartments include: Concentration of \textit{Vibrio cholerae} in the local environment $\left(B_i(t)\right)$, Safe food availability ($F_i^s(t)$) and Contaminated food ($F_i^c(t)$).
Individuals can move from $S_1$ to $S_2$ at rate $\phi_1$ when their nutritional intake falls below a threshold $K$ and return to $S_1$ at rate $\phi_2$. 

The total population in patch  at time  is:
\begin{equation}
N_i(t) = S_{1,i}(t) + S_{2,i}(t) + I_i(t) + C_i(t) + R_i(t).
\end{equation}

The dynamics of cholera in each patch are governed by both local processes (e.g., infection, recovery, bacterial shedding) and inter-patch interactions (e.g., movement of individuals, environmental exchange). The infection force  combines two mechanisms:

\begin{enumerate}
\item Direct human-to-human transmission, proportional to the prevalence of infectious individuals.

\item Food-borne transmission through ingestion of contaminated food, governed by a saturation function of disease.

\end{enumerate}

Safe food availability  is regenerated logistically and consumed by the population. The contamination of food depends on environmental bacterial concentration  and decreases with better hygiene at rate $\omega$.
The concentration of bacteria  increases with shedding from both symptomatic and asymptomatic individuals and decays naturally or via dilution/migration who is temperature-dependent at rate $\mu_B(T)$. Contaminated food is produced at rate proportional to the product of bacteria  and available food , and decays at a rate depending on sanitation and population size.
Migration between patches is modeled via matrices that govern movement of susceptible, infected, or bacteria, allowing the disease to propagate spatially.
The diagram of Figure \ref{fig} shows the overall flow of infection, food contamination, and vulnerability dynamics.
 Consider the following Flowchart \ref{fig} with:

 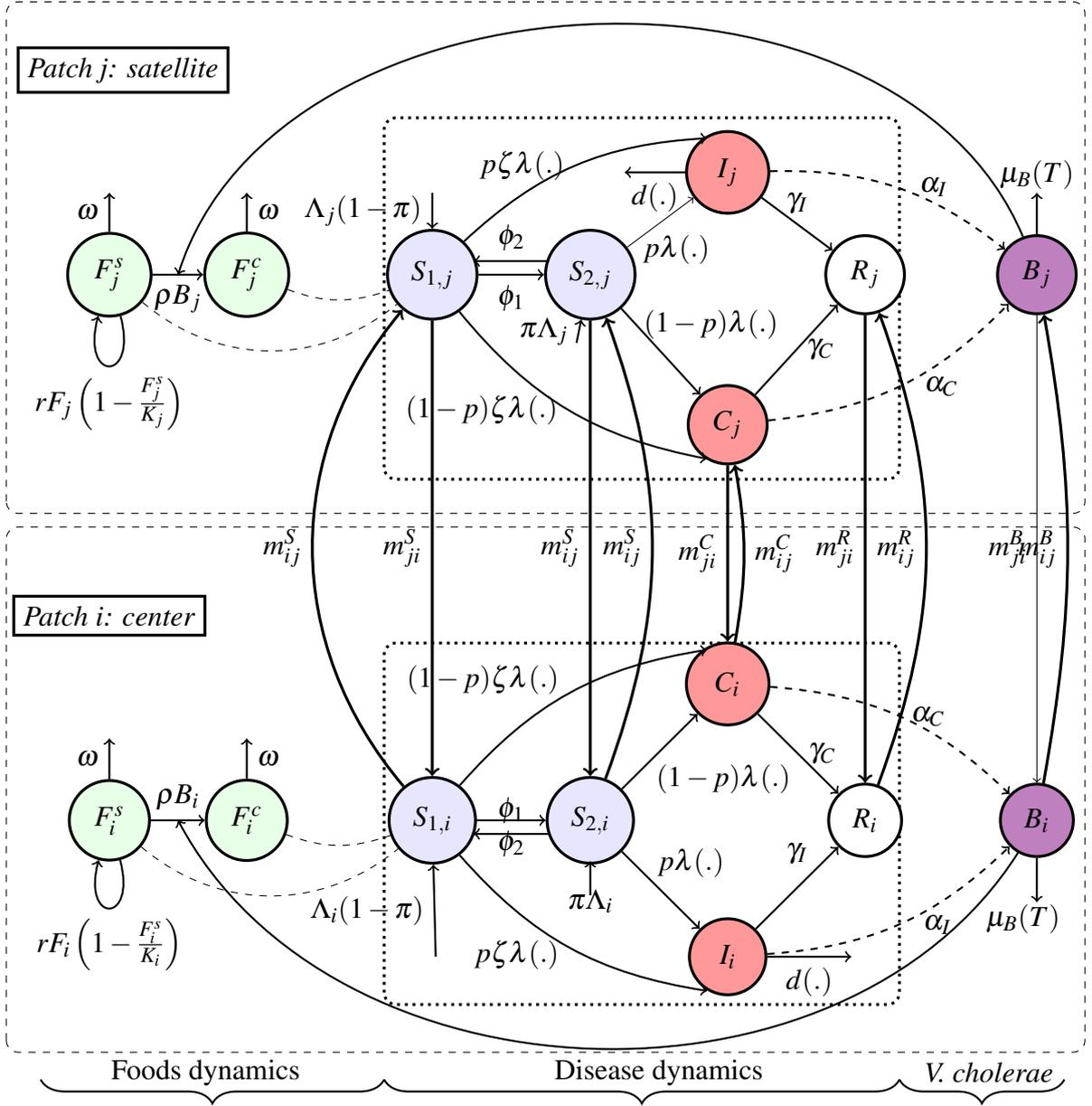
\begin{figure}[H]
 \begin{center} 
 \begin{tikzpicture}
 \draw[rounded corners=5pt, dashed] (-4.5, 3.5) rectangle (11.2, 11); 
 \draw[rounded corners=5pt, dashed] (-4.5, -4.4) rectangle (11.2, 3.3); 
 \node[very thick, black, draw] (A) at (-3, 2) {\textit{Patch i: center}};
 \node[very thick, black, draw] (A1) at (-2.8, 10) {\textit{Patch j: satellite}};

\draw[rounded corners=5pt, dotted, very thick] (1, 4) rectangle (8.5, 9.3);
\node[very thick, black, circle,fill=blue!10, inner sep=0.2cm, draw] (S) at (1.7, 7) {$S_{1,j}$};
\node[very thick, black, circle,fill=blue!10, inner sep=0.2cm, draw] (P) at (4, 7) {$S_{2,j}$};
\node[very thick, black, circle,fill=green!10, inner sep=0.2cm, draw] (F) at (-1, 7) {$F^c_j$};
\node[very thick, black, circle,fill=green!10, inner sep=0.2cm, draw] (Fs) at (-3, 7) {$F^s_j$};      
\node[very thick, black, circle,fill=red!40, inner sep=0.25cm, draw] (I) at (6, 8.5) {$I_j$};
\node[very thick, black, circle,fill=white!90, inner sep=0.2cm, draw] (X) at (8, 7) {$R_j$};
\node[very thick, black, circle,fill=red!40, inner sep=0.2cm, draw] (C) at (6, 4.8) {$C_j$};
\node[very thick, black, circle,fill=violet!50, inner sep=0.2cm, draw] (B1) at (10.5, 7) {$B_j$};

\draw[->, thick] (S) to node[below] {$\phi_1$} (P);
\draw[->, thick] (B1) to node[above,yshift=0.2cm] {$\mu_B(T)$} (10.5, 8.2);
\draw[->, thick] (3.4, 7.2) to node[above] {$\phi_2$} (2.3, 7.2);
\draw[->,thick] (P) to node[right,xshift=-0.4cm,yshift=0.4cm] {$(1-p)\lambda (.)$} (C); 
\draw[->,] (P) to node[right,xshift=-0.4cm,yshift=-0.4cm] {$p\lambda (.)$} (I);
\draw[->, thick] (I) to node[below] {$d(.)$} (4.5, 8.5);
\draw[->, thick] (I) to node[above] {$\gamma_I$} (X);
\draw[->, thick] (C) to node[right] {$\gamma_C$} (X);
\draw[->, dashed, thick] (C) to[bend right=20] node[right, xshift=0.2cm] {$\alpha_C$} (B1);
\draw[->, dashed, thick] (I) to[bend left=20] node[right, xshift=0.2cm] {$\alpha_I$} (B1);
\draw[->, thick] (S) to[bend left=20] node[left] {$p\zeta\lambda (.)$} (5.7, 9);
\draw[->, thick] (S) to[bend right=20] node[left] {$(1-p)\zeta\lambda (.)$} (5.7, 4.3);
                 
\draw[-, dashed,] (Fs) to[bend right=40] node[above] {} (S);
\draw[-, dashed,] (F) to[bend right=20] node[above] {} (S);     
\draw[->, thick] (Fs) to[loop below] node[below] {$rF_j\left(1-\frac{F_j^s}{K_j}\right)$} (F1); 
\draw[->, thick] (Fs) to node[left] {$\omega$} (-3, 8.2);  
\draw[->, thick] (F) to node[right] {$\omega$} (-1, 8.2);
\draw[->, thick] (Fs) to node[below] {$\rho B_j$} (F);
\draw[->, thick] (B1) to[bend right=70] node[below] {} (-2,7);     

      \draw[rounded corners=5pt, dotted, very thick] (1, -3.7) rectangle (8.5, 1.6);
      \node[very thick, black, circle,fill=blue!10, inner sep=0.2cm, draw] (S2) at (1.7, -1) {$S_{1,i}$};
      \node[very thick, black, circle,fill=blue!10, inner sep=0.2cm, draw] (P2) at (4, -1) {$S_{2,i}$};
      \node[very thick, black, circle,fill=green!10, inner sep=0.2cm, draw] (F2) at (-1, -1) {$F^c_i$};
      \node[very thick, black, circle,fill=green!10, inner sep=0.2cm, draw] (F1) at (-3, -1) {$F^s_i$};
     \node[very thick, black, circle,fill=red!40, inner sep=0.25cm, draw] (S3) at (6, 1) {$C_i$};
     \node[very thick, black, circle,fill=white!90, inner sep=0.2cm, draw] (X3) at (8, -1) {$R_i$};
     \node[very thick, black, circle,fill=red!40, inner sep=0.25cm, draw] (P3) at (6, -3) {$I_i$};
     \node[very thick, black, circle,fill=violet!50, inner sep=0.2cm, draw] (B) at (10.5, -1) {$B_i$};

\draw[->,very thick] (S2) to[bend left=40] node[left] {$m^S_{ij}$} (S);
\draw[->,very thick] (S) to node[left] {$m^S_{ji}$} (S2);
\draw[->,very thick] (P2) to[bend right=20] node[left] {$m^S_{ij}$} (P);  
\draw[->,very thick] (P) to node[left] {$m^S_{ij}$} (P2); 
\draw[->,] (B1) to node[left] {$m^B_{ji}$} (B); 
\draw[->,very thick] (B) to[bend right=10] node[left] {$m^B_{ij}$} (B1); 
\draw[->,very thick] (S3) to[bend right=10] node[right] {$m^C_{ij}$} (C);
\draw[->,very thick] (C) to node[left] {$m^C_{ji}$} (S3);
\draw[->,very thick] (X) to node[left] {$m^R_{ji}$} (X3);
\draw[->,very thick] (X3) to[bend right=20] node[left] {$m^R_{ij}$} (X);

   \draw[-, dashed,] (F1) to[bend right=40] node[above] {} (S2);
   \draw[-, dashed,] (F2) to[bend right=20] node[above] {} (S2);     
   \draw[->, thick] (F1) to[loop below] node[below] {$rF_i\left(1-\frac{F_i^s}{K_i}\right)$} (F1); 
   \draw[->, thick] (F1) to node[left] {$\omega$} (-3, 0.2);  
   \draw[->, thick] (F2) to node[right] {$\omega$} (-1, 0.2);
          
   \draw[->, thick] (F1) to node[above] {$\rho B_i$} (F2);
   \draw[->, thick] (B) to[bend left=60] node[below] {} (-2,-1);
\draw[decorate,decoration={brace,amplitude=10pt,mirror}, thick,black] (-4,-4.8) -- (1,-4.8);
\node at (-1.6, -4.7) {Foods dynamics};
\draw[decorate,decoration={brace,amplitude=10pt,mirror}, thick,black] (1,-4.8) -- (8.5,-4.8);
\node at (5, -4.7) {Disease dynamics};
\draw[decorate,decoration={brace,amplitude=10pt,mirror}, thick,black] (8.5,-4.8) -- (10.9,-4.8);
\node at (9.8, -4.7) {\textit{V. cholerae}};

     \draw[->, thick] (S2) to node[below] {$\phi_2$} (P2);
     \draw[->, thick] (B) to node[below,xshift=-0.2cm,yshift=-0.2cm] {$\mu_B(T)$} (10.5, -2.2);
     \draw[->, thick] (3.4, -1.2) to node[above] {$\phi_1$} (2.3, -1.2);
     \draw[->, thick] (1.7, 8.2) to node[left] {$\Lambda_j(1-\pi)$} (S);
     \draw[->, thick] (3.8, 6) to node[left] {$\pi\Lambda_j$} (P);
     \draw[->, thick] (1.75, -3) to node[left] {$\Lambda_i(1-\pi)$} (S2);
     \draw[->, thick] (4, -2.1) to node[below] {$\pi\Lambda_i$} (4, -1.6);
     \draw[->, thick] (P3) to node[below] {$d(.)$} (7.8, -3);
   \draw[->, thick] (P3) to node[above, yshift=0.2cm] {$\gamma_I$} (X3);
       \draw[->, thick] (S3) to node[right] {$\gamma_C$} (X3);
       \draw[->, dashed, thick] (S3) to[bend left=20] node[right] {$\alpha_C$} (B);
       \draw[->, dashed, thick] (P3) to[bend right=20] node[right, xshift=0.2cm] {$\alpha_I$} (B);
       \draw[->, thick] (S2) to[bend right=20] node[left, yshift=-0.1cm] {$p\zeta\lambda (.)$} (5.7, -3.5);
       \draw[->,thick] (P2) to node[right, xshift=-0.2cm, yshift=0.4cm] {$p\lambda (.)$} (P3);
       \draw[->, thick] (S2) to[bend left=20] node[left, yshift=0.2cm] {$(1-p)\zeta\lambda (.)$} (5.7, 1.5);
       \draw[->,thick] (P2) to node[right, xshift=-0.2cm, yshift=-0.4cm] {$(1-p)\lambda (.)$} (S3);  
 \end{tikzpicture}
 \caption{Schematic representation of the human-bacteria-foods dynamics during epidemic phase. For simplicity, natural
 deaths and recovery losses are not included here in the diagram but are taken into account in the model.}
 \end{center}
 \label{fig}
 \end{figure}

From the flowchart of diagram in Fig. \ref{fig}, the dynamics of the interaction between foods, the concentration of \textit{V. cholerae} in the environment and human disease transmission is given by the
following system of ordinary differential equations describing the temporal evolution of each class across all patches:

    \begin{subequations}
 \begin{align}
 \dot{S_{1,i}}  & = (1-\pi)\Lambda_i + \xi_1 R_i + \phi_2S_{2,i}  - \zeta\lambda(.)S_{1,i} - (\phi_1 + \mu) S_{1,i} + \sum_{j=1}^{n} (m_{ji}^S S_{1,j}-m_{ij}^S S_{1,i}) \, , \\
\dot{S_{2,i}}&=\pi\Lambda_i + \xi_2 R_i + \phi_1S_{1,i} - \lambda(.)S_{2,i} -(\phi_2 + \mu) S_{2,i}+ \sum_{j=1}^{n} (m_{ji}^S S_{2,j}-m_{ij}^S S_{2,i}) \, , \\
\dot{C_i}&= (1-p)\lambda(.)(S_{2,i} + \zeta S_{1,i}) - (\gamma_C + \mu) C_i+ \sum_{j=1}^{n} (m_{ji}^C C_j-m_{ij}^C C_i) \, , \\ 
\dot{I_i}&=p\lambda(.)(S_{2,i} + \zeta S_{1,i}) - (\gamma_I + d(.)+ \mu) I_i+ \sum_{j=1}^{n} (m_{ji}^I I_j-m_{ij}^I I_i) \, , \\
\dot{R_i}&=\gamma_C C_i +\gamma_I I_i - (\mu + \xi_1 +\xi_2)R_i+ \sum_{j=1}^{n} (m_{ji}^R R_j-m_{ij}^R R_i) \, , \\
\dot{F_i}^s&=r_iF_i\left(1-\cfrac{F_i^s}{K_1}\right) -(a_i+\omega) F_i^s- \rho B_iF_{i}^s \, , \\
\dot{F_{i}^c} &= \rho B_iF_{i}^s - (\omega + a_i)F_{i}^c \, ,\\
\dot{B_{i}}&= \alpha_C C_i + \alpha_I I_i - \sigma\rho B_iF_{i}^s-\mu_{B_i}(T)B_{i}+ \sum_{j=1}^{n} (m_{ji}^B B_j-m_{ij}^B B_i) \,  .
          \end{align}
 \label{systeme}
      \end{subequations}
      With initial conditions $S_{1}(0)>0, S_{2}(0)>0, C(0)>0, I(0)>0, B(0)>0, F^s(0)>0$ and $F^c(0)>0$. 
Tables \ref{var} and \ref{par} present the variables and parameters of system (\ref{systeme}), respectively. 
With $S_i=S_{1,i}+S_{2,i}$, $F_i = F_i^s + F_i^c$ and $N_i = S_{1,i}+S_{2,i}+C_i+I_i+R_i$.
\begin{table}[H]
\caption{ Variables of the SCIRW-F model \eqref{systeme}, where $i = 1, \ldots, n$ is the the index of patch.}
\begin{center}
\begin{tabular}{lll}
\hline  
\textbf{Symbols} & \textbf{Biological meanings} & \textbf{Unit}\\ \hline 
$S_{1,i}(t)$ & Susceptible without food insecurity & Number\\ 
$S_{2,i}(t)$ & Susceptible in food insecurity & Number\\ 
$C_i(t)$ & Number of asymptomatic & Number\\ 
$I_i(t)$ & Number of infected & Number\\ 
$R_i(t)$ & Recovered individuals & number\\ 
$F_{i}^s$ & Available healthy foods & calories\\ 
$F_{i}^c$ & Contaminated foods & calories\\ 
$B_{i}(t)$ & Number of bacteria in environment & cells.ml$^{-1}$\\
 \hline
\end{tabular}
\label{var}
\end{center}
\end{table}

\begin{sidewaystable} 
\caption{ Parameters and their biological meaning of system (\ref{systeme}).}
\begin{center}
\begin{tabular}{lllll}
\hline 
\textbf{Symbols} & \textbf{Biological meanings} & \textbf{Value} &\textbf{Unit} & \textbf{Source}\\ \hline 
$r_i$ & Growth rate of food-biomass & 0.1-0.5 & $Days^{-1}$ & Assumed\\
$K$ & Critical limit of food consumption& 2000 &$calories.Days^{-1}$ & \cite{FAO2021}\\
$K_1$ & Food holding capacity & 2500-3000 &$calories.Days^{-1}$ & Assumed\\
$\beta_B$ & Contact rate with \textit{V. cholerae} in the environment & 0.1-0.3 & $Days^{-1}$ & \cite{codeco2001endemic}\\ 
$\beta_H$ & Contact rate with \textit{V. cholerae} from human-to-human pathway & 0.05-0.15 & $Days^{-1}$ & Assumed\\ 
$\Delta_1$ & Half saturation rate \textit{V. cholerae}& $10^6-10^7$& $cells.mL^{-1}$ & Assumed\\ 
$\Delta_2$ & Half saturation rate for foods& $10^3-10^4$& $calories.day^{-1}$ & Assumed\\ 
$d(.)$ & Mortality rate due to disease & 0.02-0.03& $Days^{-1}$ & WHO, 2022\\
$\phi_2$ & Transition rate to vulnerable individuals & 0.1-1 & $Days^{-1}$ & Assumed \\
$\mu$ & Natural mortality rate  & $1/(55\times 365)$& $Days^{-1}$ & World Bank\\
$\mu_B$ & Mortality rate for \textit{V. cholerae} & 0.1-1& $Days^{-1}$ & Assumed \\
$\delta$ & Apparition rate of vulnerability & 0.01-0.1& $Days^{-1}$ & Assumed\\
$\alpha_C$ & Production of \textit{V. cholerae} by asymptomatic& $10^4-10^5$& $cells.Days^{-1}$ & Assumed\\
$\alpha_I$ & Production of \textit{V. cholerae} by infected& $10^6-10^8$ & $cells.Days^{-1}$ & Assumed\\
$a$ & Nutrition rate & 0.5-1 & $Days^{-1}$ & Assumed\\
$e$ & Conversion rate of food consumption &$10^{-3}$ & $Unity/calories$ & Assumed\\
$\gamma_C$ & Recovered rate for asymptomatic individuals & 0.14-0.2 & $Days^{-1}$ & Assumed\\
$\gamma_I$ & Recovered rate for infected individuals & 0.1-0.14& $Days^{-1}$ & Assumed\\
$m_{ij}$ & Migration to patch i from patch j & 0.1-0.4& $Days^{-1}$ & Assumed\\
$\rho$ & Rate of \textit{V. cholerae} - foods association & 0.37& $Days^{-1}$ & Assumed\\
$\omega$ & Natural decay for contaminated foods & 0.15 &$Days^{-1}$ & Assumed\\
$\sigma$ & Colonization coefficient of bacteria & 0.3-0.5 & $Days^{-1}$ & Assumed\\
$p$ & Proportion of direct infectious state & 0.6 & $Days^{-1}$ & WHO, 2022\\
$\zeta$ & Rate to contract infection for individuals living in food security & 0.42 & $Days^{-1}$ & Assumed\\
$\Lambda_i$ & Recruitment rate at patch i& $10^3$ & $Days^{-1}$ & Assumed\\
$\pi$ & Proportion of recruitment for vulnerable & 0.3-0.5 & $Days^{-1}$ & Assumed\\
\hline
\end{tabular}
\label{par}
\end{center}
\end{sidewaystable}

\section{Mathematical analysis}\label{sec:Analysis}
\subsection{Basic properties}
First of all, we need to establish that system \eqref{systeme} is well-posed to ensure that the model makes sense biologically.

\begin{propo}\label{propo:positif}
The nonnegative orthant $\mathbb{R}_+^{8n}$ is positively invariant under the flow of system \eqref{systeme} if initial conditions satisfy $S_{1,i}(0), S_{2,i}(0), C_i(0), I_i(0), F_i^s(0), F_i^c(0), B_i(0) > 0$ for all $i$ and all other variables are nonnegative, then $X(t) \geq 0$ for all $t \geq 0$ and for all variables $X \in \{S_{1,i}, S_{2,i}, C_i, I_i, R_i, F_i^s, F_i^c, B_i\}$.
\end{propo}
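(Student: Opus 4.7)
The plan is to invoke the Nagumo--Bony tangent condition (also known as the Kamke--M\"uller condition): for a locally Lipschitz vector field $f:\mathbb{R}^{8n}_+\to\mathbb{R}^{8n}$, the nonnegative orthant $\mathbb{R}^{8n}_+$ is positively invariant as soon as, for every coordinate $X_j$, one has $f_j(X)\ge 0$ whenever $X\in\mathbb{R}^{8n}_+$ with $X_j=0$. The entire argument therefore reduces to verifying this one-sided inequality on each of the $8n$ coordinate hyperplanes.

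Before doing that, I first secure local existence and uniqueness. The right-hand side of \eqref{systeme} is a rational function which is $\mathcal{C}^\infty$ on the open set $\{N_i>0 \text{ for all }i\}$ (the only possible singularity comes from the denominator $N_i^{q_p}$ inside $\lambda(\cdot)$ when $q_p=1$). By hypothesis $N_i(0)>0$, and since the recruitment term $\Lambda_i>0$ constantly injects individuals into $S_{1,i}\cup S_{2,i}$, a short comparison argument gives $N_i(t)>0$ on the maximal interval of existence. Hence Cauchy--Lipschitz applies and $\lambda(\cdot)$ never blows up, so the tangent condition may be evaluated pointwise.

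I then sweep through the coordinates. For the human compartments, the outflow of the form $-m_{ij}^X X_i$ and the mass-action loss terms all carry the coordinate being set to zero as a factor, so they vanish on the corresponding face and leave only nonnegative contributions. Concretely, at $S_{1,i}=0$ the right-hand side reduces to $(1-\pi)\Lambda_i+\xi_1 R_i+\phi_2 S_{2,i}+\sum_j m_{ji}^S S_{1,j}\ge 0$, and the analogous check at $S_{2,i}=0$, $C_i=0$, $I_i=0$, $R_i=0$ is identical in spirit: only nonnegative inflows, recoveries, transitions, and immigration from other patches survive. For the environmental and food compartments, at $F_i^c=0$ the derivative is $\rho B_i F_i^s\ge 0$, at $B_i=0$ it is $\alpha_C C_i+\alpha_I I_i+\sum_j m_{ji}^B B_j\ge 0$, and at $F_i^s=0$ the consumption and contamination terms vanish while the logistic term is nonnegative.

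The main obstacle is bookkeeping rather than conceptual: the multi-patch mobility must be handled consistently, noting that the outflow $-m_{ij}^X X_i$ is proportional to the coordinate we have just set to zero and therefore disappears automatically, whereas the inflow $+m_{ji}^X X_j$ is nonnegative because the \emph{other} patches lie in the nonnegative orthant by the face hypothesis. A secondary technical point, already dealt with in the preliminary step, is keeping $N_i(t)$ bounded away from zero so that $\lambda(\cdot)$ stays well-defined throughout the verification. Once all $8n$ tangency inequalities are in place, Nagumo's theorem immediately yields the positive invariance of $\mathbb{R}^{8n}_+$ under the flow of \eqref{systeme}, which is the conclusion of the proposition.
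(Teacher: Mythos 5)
Your proposal is correct, and the core verification is the same as the paper's: on each face $\{X_j=0\}$ of the orthant you check that the remaining terms of $\dot X_j$ are nonnegative (inflows, recruitment, recoveries, and the immigration terms $+m_{ji}^X X_j$ survive, while every loss term carries the vanishing coordinate as a factor). The difference is in the packaging, and yours is the tighter of the two. The paper runs a first-exit-time contradiction: it fixes the minimal $t_1$ with $X_k(t_1)=0$, shows $\dot X_k(t_1)>0$, and then integrates to conclude $X_k(t_1)>X_k(0)>0$ --- a step that does not follow from positivity of the derivative at the single instant $t_1$ (the correct contradiction is that a positive function reaching zero at $t_1$ must satisfy $\dot X_k(t_1)\le 0$). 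By invoking the Nagumo/quasi-positivity criterion you sidestep this entirely and only need the weak inequality $f_j\ge 0$ on each face, which is all the vector field actually delivers (e.g.\ $\dot F_i^c=\rho B_iF_i^s$ can legitimately vanish). You also address a point the paper ignores: the force of infection $\lambda(\cdot)$ has $N_i^{q_p}$ in the denominator, so local Lipschitz continuity and the pointwise evaluation of the tangency condition require $N_i(t)>0$, which you secure by a comparison argument using $\Lambda_i>0$. In short, same inequalities, but your route rests on a standard theorem and repairs two genuine weaknesses in the paper's write-up.
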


\begin{proof}
Assume the initial conditions are nonnegative with $S_{1,i}(0), S_{2,i}(0), C_i(0), I_i(0), F_i^s(0),$ \\ $ F_i^c(0), B_i(0) > 0$ for all $i = 1,\dots,n$. Let $t_1 > 0$ be the minimal time at which some component $X_k(t_1) = 0$ for $X \in \{S_{1}, S_{2}, I, C, R, F^s, F^c, B\}$, and $X_k(t) > 0$ for all $t < t_1$. We investigate which component can be the first to exit the nonnegative orthant.

Suppose that for some $i$, $S_{1,i}(t_1) = 0$. From the corresponding equation,
\begin{equation}
\dot{S}_{1,i} = (1-\pi)(\Lambda_i + \xi R_i) + \phi_2S_{2,i} +  - \zeta\lambda(.)S_{1,i} - (\phi_1 + \mu) S_{1,i} + \sum_{j=1}^{n} (m_{ji}^S S_{1,j}-m_{ij}^S S_{1,i}) \, ,
\end{equation}

we observe that at $t_1$, all negative terms vanish due to $S_{1,i}(t_1) = 0$, and the remaining terms $\phi_2 S_{2,i}(t_1)$ and $\sum_{j} m_{ji}^S S_{1,j}(t_1)$ are nonnegative and strictly positive since $S_{2,i}(t_1) > 0$ and $S_{1,j}(t_1) > 0$ for $j \neq i$ (as $S_{1,i}$ is the first to reach zero). Therefore,
\[
\dot{S}_{1,i}(t_1) > 0,
\]
And it comes that
\[\int_{0}^{t_1} \dot{S}_{1,i}(t) dt >  0 \, ,\]
which implies  \begin{align*} 
&  S_{1,i}(t_1) - S_{1,i}(0) > 0 \Longrightarrow  S_{1,i}(t_1) > S_{1,i}(0) > 0 \; .
\end{align*}         
Finally, one obtains $S_{1,i}(t_1) >0$, that contradict the above hypothesis. The same reasoning applies for $S_{2,i}$ using its equation.

Now suppose $I_i(t_1) = 0$ for some $i$, with $I_i(t) > 0$ for $t < t_1$. Then from
\begin{equation}
\dot{I}_i = p \lambda (S_{2,i} + \zeta S_{1,i}) - (\gamma_I + d(.) + \mu) I_i + \sum_{j} (m_{ji}^I I_j - m_{ij}^I I_i),
\end{equation}
we have $\dot{I}_i(t_1) = p \lambda (S_{2,i} + \zeta S_{1,i}) + \sum_j m_{ji}^I I_j > 0$, since $S_{1,i}, S_{2,i}, I_j > 0$ for all $j$ and $t < t_1$. This again contradicts the assumption that $I_i(t_1) = 0$ and decreasing.

Similar reasoning applies to the asymptomatic class $C_i$, where the derivative at $t_1$ takes the form:
\begin{equation}
\dot{C}_i = (1-p)\lambda(S_{2,i} + \zeta S_{1,i}) + \sum_j m_{ji}^C C_j > 0.
\end{equation}

For the recovered class $R_i$, we have
\begin{equation}
\dot{R}_i = \gamma_C C_i + \gamma_I I_i - (\mu + \xi)R_i+ \sum_j (m_{ji}^R R_j - m_{ij}^R R_i).
\end{equation}
At $t_1$, if $R_i(t_1) = 0$, using similar process the remaining terms are nonnegative and strictly positive, leading to $\dot{R}_i(t_1) > 0$.

Next, consider the food compartments. If $F_i^s(t_1) = 0$, from
\begin{equation}
\dot{F}_i^s = r_i F_i\left(1 - \frac{F_i^s}{K_1}\right) - (a_i + \omega) F_i^s - \rho B_i F_i^s,
\end{equation}
we have $\dot{F}_i^s(t_1) = r_i F_i > 0$ , since $F_i^c < K_1$ and $F_i = F_i^s + F_i^c > 0$.

Similarly, if $F_i^c(t_1) = 0$,
\[
\dot{F}_i^c = \rho B_i F_i^s > 0,
\]
since $B_i, F_i^s > 0$.

For the bacterial concentration $B_i$, the equation
\begin{equation}
\dot{B}_i = \alpha_C C_i + \alpha_I I_i - \mu_B(T) B_i + \sum_j (m_{ji}^B B_j - m_{ij}^B B_i)
\end{equation}
gives $\dot{B}_i(t_1) > 0$ because $C_i, I_i > 0$ and $B_i = 0$ eliminates the negative term.

Hence, no variable can be the first to exit the nonnegative orthant. Therefore, all components remain nonnegative for all $t \geq 0$, and $S_{1,i}(t), S_{2,i}(t), F_i^s(t)$ remain strictly positive.
\end{proof}

\begin{lem}\label{lem:bornitude}
The total human population $T_H(t) = \sum_{i=1}^n N_i(t)$ and the total bacterial concentration $T_B(t) = \sum_{i=1}^n B_i(t)$ are bounded for all $t \geq 0$.
\end{lem}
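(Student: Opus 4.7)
The plan is to obtain a differential inequality for $T_H(t)$ by summing the five human equations over $i=1,\dots,n$ and exploiting mass-conservation cancellations, then feed the resulting bound into the bacteria sum to bound $T_B(t)$.

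First, I would add equations (a)--(e) of system \eqref{systeme} over all patches. Several blocks of terms are designed to cancel: the double sums $\sum_{i,j}(m_{ji}^X X_j - m_{ij}^X X_i)$ are telescoping and vanish identically for each compartment $X\in\{S_1,S_2,C,I,R\}$; the transitions $\phi_1 S_{1,i}$ and $\phi_2 S_{2,i}$ appear with opposite signs in the $S_{1,i}$ and $S_{2,i}$ equations; the force of infection terms $p\lambda(\cdot)(S_{2,i}+\zeta S_{1,i})$ and $(1-p)\lambda(\cdot)(S_{2,i}+\zeta S_{1,i})$ are subtracted from the susceptible compartments and added to $I_i$ and $C_i$; and the immunity-loss terms $\xi_1 R_i$ and $\xi_2 R_i$ cancel with $-(\xi_1+\xi_2)R_i$ in the $R_i$ equation. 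Finally, the recruitment splits $(1-\pi)\Lambda_i+\pi\Lambda_i=\Lambda_i$. What survives is
\begin{equation*}
\dot{T}_H(t) = \sum_{i=1}^n \Lambda_i \;-\; \mu\, T_H(t) \;-\; \sum_{i=1}^n d(\cdot)\,I_i(t).
\end{equation*}
Since $d(\cdot)\geq 0$ and $I_i\geq 0$ by Proposition \ref{propo:positif}, setting $\Lambda:=\sum_i\Lambda_i$ gives $\dot{T}_H\leq \Lambda-\mu T_H$, and a standard Gronwall/comparison argument yields $T_H(t)\leq \max\!\bigl(T_H(0),\Lambda/\mu\bigr)$ for all $t\ge 0$.

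For the bacterial total, I would sum equation (h) over $i$. The migration double sum again telescopes to zero and the $-\sigma\rho B_i F_i^s$ terms are nonpositive, so
\begin{equation*}
\dot{T}_B(t) \;\leq\; \sum_{i=1}^n\bigl(\alpha_C C_i + \alpha_I I_i\bigr) \;-\; \mu_B^{\min}\,T_B(t),
\end{equation*}
where $\mu_B^{\min}:=\min_{i}\mu_{B_i}(T)>0$. Because $C_i+I_i\leq N_i$ and $\sum_i N_i=T_H$ is already bounded by the previous step, the source term is bounded by $\max(\alpha_C,\alpha_I)\cdot\max\!\bigl(T_H(0),\Lambda/\mu\bigr)=:M$, and the comparison principle gives $T_B(t)\leq \max\!\bigl(T_B(0),M/\mu_B^{\min}\bigr)$.

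The only delicate point is the bookkeeping in the first step: one must be careful to verify that the $\pi$/$(1-\pi)$ split of $\Lambda_i$, the $\xi_1/\xi_2$ immunity losses, and the $\zeta$-weighted force of infection all cancel cleanly so that no infection-related or food-related quantity leaks into the conservation law for $T_H$. Once this cancellation is confirmed and $d(\cdot)\geq 0$ is invoked to drop the disease-induced mortality term, the rest is a routine linear comparison argument, and the boundedness of $T_B$ follows because its production is driven only by the already-bounded human compartments.
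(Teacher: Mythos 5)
Your proof is correct and follows essentially the same route as the paper: sum the human equations, use the telescoping cancellation of the migration terms together with the cancellation of the $\phi$-transitions, the $\xi_1/\xi_2$ immunity losses, and the $p/(1-p)$ split of the force of infection to reduce to $\dot{T}_H \le \Lambda - \mu T_H$, then bound $T_B$ by the already-bounded human total via a linear comparison. Your version is in fact slightly more explicit than the paper's about the bookkeeping of which terms cancel in the human sum, but the underlying argument is identical.
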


\begin{proof}
Consider first the evolution of the total human population:
\begin{equation}
T_H(t) = \sum_{i=1}^n N_i(t) = \sum_{i=1}^n \left(S_{1,i}(t) + S_{2,i}(t) + C_i(t) + I_i(t) + R_i(t)\right).
\end{equation}
By summing the differential equations for $S_{1,i}, S_{2,i}, C_i, I_i$ and $R_i$ over all $i = 1, \dots, n$, we obtain:
\begin{equation}
\frac{d T_H}{dt} = \sum_{i=1}^n \left( \dot{S}_{1,i} + \dot{S}_{2,i} + \dot{C}_i + \dot{I}_i + \dot{R}_i \right).
\end{equation}

Observe that all inter-patch movement terms cancel out in the sum. For example, migration of $S_{1,i}$ from patch $i$ to patch $j$ appears with opposite signs in the equations for $S_{1,i}$ and $S_{1,j}$:
\begin{equation}
\sum_{i=1}^n \sum_{j=1}^n (m_{ji}^X X_j - m_{ij}^X X_i) = 0.
\end{equation}
Now, from the system \eqref{systeme}, 
suppose \[\mu^H = \min_{i=1,\ldots,n} \mu_i,\]

Thus,
\begin{equation}
\frac{d T_H}{dt} \leq \Lambda^* -\mu^H T_H(t).
\end{equation}

This implies that the total human population is non-increasing over time. And using Proposition \ref{propo:positif}, hence for all $t \geq 0$:
\begin{equation}
0 \leq T_H(t) \leq \max \left\{T_H(0), \frac{\Lambda^*}{\mu^H}\right\}.
\end{equation}
with \[\Lambda = \sum_{i=1}^{n} \Lambda_i. \]
Now consider the total bacterial concentration:
\begin{equation}
T_B(t) = \sum_{i=1}^n B_i(t).
\end{equation}
From the bacterial dynamics:
\begin{equation}
\dot{B}_i = \alpha_C C_i + \alpha_I I_i - \mu_{B_i}(T) B_i- \sigma\rho B_iF_{i}^s + \sum_{j=1}^{n} (m_{ji}^B B_j - m_{ij}^B B_i),
\end{equation}
the migration terms again cancel upon summing over $i$.

Thus:
\begin{equation}
\frac{d T_B}{dt} = \sum_{i=1}^n (\alpha_C C_i + \alpha_I I_i - \sigma\rho B_iF_{i}^s- \mu_{B_i}(T) B_i).
\end{equation}

Let $\alpha_{\max} = \max(\alpha_C, \alpha_I)$ (intuitively $\alpha_{\max} = \alpha_I$), $C_{\max}(t) = \max_i C_i(t)$, and similarly for $I_i$ (because $C_i, I_i$ are bounded). Then:
\begin{equation}
\frac{d T_B}{dt} \leq \alpha_{\max} \sum_{i=1}^n (C_i + I_i) - \mu_{\min} T_B,
\end{equation}
where $\mu_{\min} = \min_{i,t} \mu_{B_i}(T(t)) > 0$ (by hypothesis that $T(t) < T_{\max}$ and temperature is bounded in real situations).

Let $K = \alpha_{\max} \cdot \max_{t} T_H(t)$. Then:
\begin{equation}
\frac{d T_B}{dt} \leq K - \mu_{\min} T_B(t).
\end{equation}

This is a linear differential inequality of the form:
\begin{equation}
\frac{d T_B}{dt} + \mu_{\min} T_B \leq K.
\end{equation}
Integrating, we find:
\begin{equation}
T_B(t) \leq e^{-\mu_{\min} t} \left(T_B(0) - \frac{K}{\mu_{\min}} \right) + \frac{K}{\mu_{\min}}.
\end{equation}
Therefore, $T_B(t)$ is bounded for all $t \geq 0$, and in particular:
\begin{equation}
0 \leq T_B(t) \leq \max\left( T_B(0), \frac{K}{\mu_{\min}} \right).
\end{equation}

Combining both bounds, all human compartments and the bacteria concentration remain bounded over time. Since the other compartments are involved in equations with logistic or saturating growth, bounded by carrying capacities $K_1$ and food production rates $r_i$, they are also bounded.

Thus, all variables of system \eqref{systeme} are uniformly bounded, and the solution remains in a compact, positively invariant subset of $\mathbb{R}_+^{8n}$.
\end{proof}

\begin{ppte}\label{ppte:domaine}
Let the initial conditions satisfy $X(0) \in \Omega$, where
\begin{equation}
\Omega = \left\{ X \in \mathbb{R}_+^{8n} \;\middle|\;
\begin{array}{l}
0 \leq T_H(t) \leq M_H, \\
0 \leq T_B(t) \leq M_B, \\
0 \leq F_i^s(t) + F_i^c(t) \leq K_1,\quad \forall i = 1,\ldots,n
\end{array}
\right\}.
\end{equation}
with,
\[
M_H = \max \left\{T_H(0), \frac{\Lambda}{\mu^H}\right\} \quad \text{and} \quad M_B = \max\left( T_B(0), \frac{K}{\mu_{\min}} \right).
\]
Then, the region $\Omega$ is positively invariant under the flow of system \eqref{systeme}. That is, any solution $X(t)$ with $X(0) \in \Omega$ remains in $\Omega$ for all $t \geq 0$.
\end{ppte}

\begin{proof}
From Proposition \ref{propo:positif}, we know that all components of the solution remain nonnegative, i.e., $X(t) \in \mathbb{R}_+^{8n}$ for all $t \geq 0$. Furthermore, from Property \ref{ppte:domaine}, the total human population satisfies $T_H(t) = \sum_{i=1}^n N_i(t) \leq M_H$, and the total bacterial concentration satisfies $T_B(t) = \sum_{i=1}^n B_i(t) \leq M_B$ for all $t \geq 0$.

Moreover, since the food compartment $F_i(t) = F_i^s(t) + F_i^c(t)$ evolves according to a logistic growth law with upper bound $K_1$, it follows that $F_i(t) \leq K_1$ for each $i = 1, \ldots, n$.

Hence, all conditions defining the set $\Omega$ are preserved for all $t \geq 0$, which proves that $\Omega$ is positively invariant under the flow of system \eqref{systeme}.
\end{proof}

\begin{theo}[Global existence and boundedness of solutions]
Let the initial conditions $X(0) \in \Omega \subset \mathbb{R}_+^{8n}$, and suppose that the right-hand side of system \eqref{systeme} is locally Lipschitz-continuous in $X$. Then, the system \eqref{systeme} admits a unique global solution $X(t)$ for all $t \geq 0$, which remains in the positively invariant set $\Omega$.
\end{theo}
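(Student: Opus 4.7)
The plan is to combine three ingredients that have already been put in place: local existence via the Cauchy--Lipschitz theorem, forward invariance of the domain $\Omega$, and an a priori boundedness argument that rules out finite-time blow-up. Concretely, I would first observe that the right-hand side $F(X)$ of system \eqref{systeme} is a sum of polynomial and rational terms in the components of $X$, with denominators of the form $\Delta_1 + F_i^c$, $\Delta_2 + F_i$, $N_i^{q_p}$ that stay bounded away from $0$ on $\Omega$. Hence $F$ is $\mathcal{C}^1$ and in particular locally Lipschitz on an open neighborhood of $\Omega$, so by the Cauchy--Lipschitz (Picard--Lindel\"of) theorem applied at the initial datum $X(0)\in\Omega$, there exists a maximal interval $[0,T_{\max})$ with $T_{\max}\in(0,+\infty]$ on which a unique classical solution $X(t)$ exists.

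Next I would pin the solution inside $\Omega$ on $[0,T_{\max})$. The forward invariance of $\mathbb{R}_+^{8n}$ is exactly Proposition \ref{propo:positif}, and the additional bounds defining $\Omega$, namely $T_H(t)\le M_H$, $T_B(t)\le M_B$ and $F_i^s(t)+F_i^c(t)\le K_1$, were shown to be preserved by the flow in Property \ref{ppte:domaine}. Applying these results on $[0,T_{\max})$ gives $X(t)\in\Omega$ throughout the maximal interval of existence.

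The decisive step is then the extension argument. I would argue by contradiction: assume $T_{\max}<+\infty$. Since $\Omega$ is a bounded subset of $\mathbb{R}_+^{8n}$, the trajectory $\{X(t):t\in[0,T_{\max})\}$ lies in the compact set $\overline{\Omega}$. Moreover, $F$ being continuous on $\overline{\Omega}$ is bounded there, so $\dot X$ is bounded and $X(\cdot)$ is uniformly Lipschitz on $[0,T_{\max})$; by Cauchy's criterion, $\lim_{t\to T_{\max}^-} X(t) =: X^\star$ exists and belongs to $\overline{\Omega}$. Reapplying Cauchy--Lipschitz at $X^\star$ yields a local solution on some interval $[T_{\max},T_{\max}+\eta)$, which, concatenated with $X$, contradicts the maximality of $T_{\max}$. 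Therefore $T_{\max}=+\infty$ and the solution is global.

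I do not anticipate a genuine obstacle: the only point requiring a little care is checking that none of the denominators in the force of infection $\lambda(\cdot)$ vanish on $\Omega$ (so that local Lipschitzness really holds up to the boundary where, e.g., $F_i=0$ or $N_i=0$), and that $N_i^{q_p}$ does not cause issues when $q_p=0$ is the satellite case or when $N_i$ could a priori vanish. In the standard-incidence patches, one rules out $N_i=0$ by noting that the recruitment term $\Lambda_i>0$ prevents the total population from collapsing, so $N_i(t)$ stays bounded below on any compact subinterval; this can be wrapped into a short remark before invoking Cauchy--Lipschitz and does not disturb the overall flow of the argument.
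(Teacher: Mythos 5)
Your proposal is correct and follows essentially the same route as the paper: local existence by Cauchy--Lipschitz, confinement to $\Omega$ via Proposition \ref{propo:positif}, Lemma \ref{lem:bornitude} and Property \ref{ppte:domaine}, and then boundedness ruling out finite-time blow-up so that $t_{\max}=+\infty$. You actually spell out two points the paper leaves implicit --- the continuation-at-$T_{\max}$ argument and the check that the denominators $\Delta_1+F_i^c$, $\Delta_2+F_i$, $N_i^{q_p}$ stay bounded away from zero --- which strengthens rather than alters the argument.
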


\begin{proof}
The system \eqref{systeme} is a system of ordinary differential equations with locally Lipschitz-continuous right-hand side in  $\mathbb{R}_+^{8n}$. Thus, by the Cauchy-Lipschitz theorem, a unique local solution $X(t)$ exists for some maximal time interval $[0, t_{\max})$.

From Proposition \ref{propo:positif}, the solution remains non-negative for all $t \in [0, t_{\max})$. Lemma \ref{lem:bornitude} ensures that the total human population and bacterial concentration remain bounded, and Property \ref{ppte:domaine} confirms that the compact positively invariant set $\Omega$ is preserved under the flow of the system.

Therefore, the solution remains bounded and cannot blow up in finite time, implying that $t_{\max} = +\infty$. Consequently, the local solution extends to a global solution. Since the solution stays in $\Omega$ for all $t \geq 0$, the proof is complete.
\end{proof}

\begin{remark}
\begin{itemize} \item[\color{white}]
\item Movements within the same patch are ignored ($m_{ii} = 0, \;\forall i$).
\item In this paper, authors assimilate vulnerable individuals at individuals in food insecurity.
\item Generally, when we have patch model in case without movement ($m_{ji}^S=...=m_{ji}^B=0$), using \cite{arinocity} the basic reproduction number may be given by:
\begin{equation}
\mathcal{R}_0^{\text{global}} = \max_{i=1,2,3,4} \mathcal{R}_{0,i}.
\end{equation}
\end{itemize}
\end{remark}
\subsection{Disease-Free Equilibrium}

At the disease-free equilibrium (DFE), the reduced system becomes:

\[
\left\{
\begin{array}{cl}
\dot{S}_{1,i} &= (1-\pi)\Lambda + \xi_1 R_i + \phi_2S_{2,i} - (\phi_1 + \mu) S_{1,i} + \sum_{j=1}^{n} (m_{ji}^S S_{1,j}-m_{ij}^S S_{1,i}), \\
\dot{S}_{2,i} &= \pi\Lambda + \xi_2 R_i + \phi_1 S_{1,i} - (\phi_2 + \mu) S_{2,i} + \sum_j (m_{ji}^S S_{2,j} - m_{ij}^S S_{2,i}), \\
\dot{R}_i &= - (\mu + \xi_1 + \xi_2)R_i + \sum_j (m_{ji}^R R_j - m_{ij}^R R_i), \\
\dot{F}_i^s &= r_i F_i^s \left(1 - \frac{F_i^s}{K_1}\right) - (a_i + \omega) F_i^s.
\end{array}
\right.
\]

Let us define the following vectors:

\[
S_1 = (S_{1,1}, \ldots, S_{1,n})^\top,\quad
S_2 = (S_{2,1}, \ldots, S_{2,n})^\top,\quad
R = (R_1, \ldots, R_n)^\top
\]

Let $\mathcal{M}^S$ and $\mathcal{M}^R$ be the movement matrices for susceptible and recovered individuals, respectively, and consider all the other parameters in matrix form such as

\[
\phi_1 = \mathrm{diag}(\phi_1^{(i)}),\quad \phi_2 = \mathrm{diag}(\phi_2^{(i)})
\]
\begin{ppte}
$(\phi_1 + \phi_2 + \mu^H - \mathcal{M}^S)^{-1}$ is nonsingular and $(\phi_1 + \phi_2 + \mu^H - \mathcal{M}^S)^{-1}>>0$.
\end{ppte}
\begin{proof}
The proof is guaranteed by \cite{Arino2019}, Proposition 3.
\end{proof}
The linearized system at the DFE becomes:

\[
\left\{
\begin{array}{l}
\cfrac{d}{dt} \begin{pmatrix} S_1 \\ S_2 \end{pmatrix}
= \begin{pmatrix}
(1-\pi)\Lambda \\
\pi\Lambda 
\end{pmatrix} +
\begin{pmatrix}
\mathcal{M}^S - (\phi_1 + \mu^H) & \phi_2 \\
\phi_1 & -(\phi_2 + \mu^H)+ \mathcal{M}^S
\end{pmatrix}
\begin{pmatrix} S_1 \\ S_2 \end{pmatrix}
= 0 \\
\cfrac{dR}{dt} =  -\mu R +\mathcal{M}^R R \Rightarrow R^* \in \ker(\mathcal{M}^R) \quad \text{(since $\mathcal{M}^R$ is irreducible)}
\end{array}
\right.
\]

Then, define:

\[
A = \begin{pmatrix}
\mathcal{M}^S - (\phi_1 + \mu^H) & \phi_2 \\
\phi_1 & -(\phi_2 + \mu^H)+ \mathcal{M}^S
\end{pmatrix}, \quad \text{and} \quad b = \begin{pmatrix} b_1 \\ b_2 \end{pmatrix}.
\]
with: $$b_1 = \Lambda(1-\pi), b_2 = \pi\Lambda. $$

One has
\begin{equation*}
S_1^* =  ((\mathcal{M}^S - \mu^H)(\mathcal{M}^S-\mu^H-\phi_1-\phi_2))^{-1}((1-\pi)(-\mathcal{M}^S +\mu^H)+\phi_2)\Lambda,
\end{equation*}
\begin{equation}
S_2^* =  ((\mathcal{M}^S - \mu^H)(\mathcal{M}^S-\mu^H-\phi_1-\phi_2))^{-1}(\pi(\mu^H-\mathcal{M}^S)+\phi_1)\Lambda.
\end{equation}

Then, the equilibrium is given by:
\begin{equation}\label{eq:equilibre}
\mathcal{E}_0^{(*)} = \left(S^*_1, S^*_2, \ldots, F^*_s, 0, 0\right),
\end{equation}
This expression defines the disease-free equilibrium distribution of susceptible individuals, accounting for nutritional transitions and mobility.

\begin{con}
Using the above result in Eq.\ref{eq:equilibre} without migration there exists two (02) diseases-free equilibria $\mathcal{E}_0^{(1)}, \mathcal{E}_0^{(2)} \in \mathbb{R}_+^{8n}$ for system \eqref{systeme}.
\begin{enumerate}
\item A state of equilibrium with susceptible individuals but without food.
\[
\mathcal{E}_0^{(1)} = \left(S^0_1, S^0_2, \ldots, 0\right),
\]
\item A state of equilibrium where there are individuals and food.
\[
\mathcal{E}_0^{(2)} = (S^0_1, S^0_2, 0, 0, 0, F_s^0, 0, 0).
\]
\end{enumerate}
where $S^0_1 = \cfrac{\Lambda(\phi_2 + \mu(1-\pi))}{\mu(\mu+\phi_1 + \phi_2)}, \; S^0_2 = \cfrac{\Lambda(\phi_1 + \mu\pi)}{\mu(\mu +\phi_1 + \phi_2)}$ and $F_s^0 = K\left(1-\cfrac{a+\omega}{r_i}\right)$.
\end{con}

 \subsection{Basic reproduction number}
 Let us consider the system \eqref{systeme} without inter-patch movements (i.e $m_{ji}^X = 0$, for all variables $X$ and all $i,j$). We find disease-free equilibria, and they occur when : $I = C = B = F^c = 0$.
 
 To compute this basic reproduction number, we use the method of Van den \cite{Driessche2002}. To do so, system \eqref{systeme} can be written in the following form:
 
 \begin{equation}
 \dot{X}=\mathcal{F}(X)-\mathcal{V}(X),
 \end{equation}
 where $X=(C,I,F^s,B)$,
 
Then inside a satellite, we have:
  
  \begin{align}
  \mathcal{F}(X) & = \begin{pmatrix}
    (1-p)\left(\beta_H\cdot (I + \varepsilon C){N}\right)(S_2 + \zeta S_1)\\
    \\
    p\left(\beta_H\cdot (I_i + \varepsilon C)\right)(S_2 + \zeta S_1)\\ 0 \\ 0
   \end{pmatrix}\,\,\ \mbox{and}&\mathcal{V}(X) & =  \begin{pmatrix}
     (\gamma_C+\mu) C\\
     \\
    (\gamma_I + d+\mu)I\\ -\rho BF^s + (a+\omega)F^c\\
    -(\alpha_CC + \alpha_II) + \mu_B
    \end{pmatrix}.  
  \end{align}
 
 We point that $\mathcal{F}(X)$ is the speed of appearance of newly infected humans. These are the newly infected obtained by transmissions of all kinds. and $\mathcal{V}(X)$ is the rate of onset of new cases for reasons other than disease.
  
  The Jacobian matrix of $\mathcal{F}(X)$ and $\mathcal{V}(X)$ at the disease free equilibrium point $\mathcal{E}_0^{(4)}$ are:
  
\begin{equation}
\frac{d\mathcal{F}}{dy} = 
\begin{pmatrix}
  (1-p)\beta_H\varepsilon(S_2^* + \zeta S_1^*) & (1-p)\beta_H(S_2^* + \zeta S_1^*) & (1-p)\beta_B(S_2^* + \zeta S_1^*) & 0\\
  \\
 p\beta_H\varepsilon(S_2^* + \zeta S_1^*) &  p\beta_H(S_2^* + \zeta S_1^*)&p\beta_B(S_2^* + \zeta S_1^*) & 0\\
 0 & 0 & 0 & 0\\
 0 & 0 & 0 & 0
   \end{pmatrix},
\end{equation}
and
\begin{equation}
{\frac{d\mathcal{V}}{dy}\lvert_{X=\mathcal{E}_0^{(2)}}}^{-1} =  \left(\begin{matrix}\cfrac{1}{\gamma_{C}+\mu} & 0 & 0 & 0\\0 & \cfrac{1}{d + \gamma_{I}+\mu} & 0 & 0\\\cfrac{F_s^0 \alpha_{C} \rho}{\gamma_{C} \mu_{B} \left(a + \omega\right)} & \cfrac{F_s^0 \alpha_{I} \rho}{\mu_{B} \left(a d + a \gamma_{I} + d \omega + \gamma_{I} \omega\right)} & \cfrac{1}{a + \omega} & \cfrac{F_s^0 \rho}{\mu_{B} \left(a + \omega\right)}\\\cfrac{\alpha_{C}}{\gamma_{C} \mu_{B}} & \cfrac{\alpha_{I}}{\mu_{B} \left(d + \gamma_{I}\right)} & 0 & \cfrac{1}{\mu_{B}}\end{matrix}\right),
\end{equation}
 Then,
{\scriptsize \begin{equation}
\mathcal{F}\mathcal{V}^{-1} = j
\displaystyle \left(\begin{matrix}\cfrac{m \left(F_s^0 \alpha_{C} \beta_{B} \rho + \beta_{H} \epsilon \mu_{B} \left(a + \omega\right)\right)}{\gamma_{C} \mu_{B} \left(a + \omega\right)} & \cfrac{m \left(F_s^0 \alpha_{I} \beta_{B} \rho + a \beta_{H} \mu_{B} + \beta_{H} \mu_{B} \omega\right)}{\mu_{B} \left(a d + a \gamma_{I} + d \omega + \gamma_{I} \omega\right)} & \cfrac{\beta_{B} m}{a + \omega} & \cfrac{F_s^0 \beta_{B} m \rho}{\mu_{B} \left(a + \omega\right)}\\\cfrac{p \left(F_s^0 \alpha_{C} \beta_{B} \rho + \beta_{H} \epsilon \mu_{B} \left(a + \omega\right)\right)}{\gamma_{C} \mu_{B} \left(a + \omega\right)} & \cfrac{p \left(F_s^0 \alpha_{I} \beta_{B} \rho + a \beta_{H} \mu_{B} + \beta_{H} \mu_{B} \omega\right)}{\mu_{B} \left(a d + a \gamma_{I} + d \omega + \gamma_{I} \omega\right)} & \cfrac{\beta_{B} p}{a + \omega} & \cfrac{F_s^0 \beta_{B} p \rho}{\mu_{B} \left(a + \omega\right)}\\0 & 0 & 0 & 0\\0 & 0 & 0 & 0\end{matrix}\right).
 \end{equation}}
with
\[j = S_2^* + \zeta S_1^*, \quad \text{and} \quad m = 1- p.\]
One has,
\[
Sp(\mathcal{F}\mathcal{V}^{-1}) = \left\{0, \mathcal{R}_0\right\}.
\]
Using Theorem 2, \cite{Driessche2002} \[\mathcal{R}_0 = \rho(\mathcal{F}\mathcal{V}^{-1})\]

Yields
\begin{equation}
\mathcal{R}_0 = (S_2^0 + \zeta S_1^0)(\mathcal{R}_0^H  + \mathcal{R}_0^E).
\end{equation}
\begin{equation}
\mathcal{R}_0^H = \beta_H\left(\frac{(1-p)\varepsilon}{\gamma_C + \mu} + \frac{p}{\gamma_I + d(.) + \mu}\right) \quad \text{and} \quad \mathcal{R}_0^E = \beta_B\frac{F_s^0\rho}{\mu_B(a+\omega)}\left(\frac{(1-p)\alpha_C}{\gamma_C}  + \frac{p\alpha_I}{\gamma_I + d}\right).
\end{equation}
Where  $\mathcal{R}_0^H$ represents the transmission of the disease via human interactions.

Therefore, if a patch $i$, is isolated from the others i.e $m_{ji}^X = 0$, for all variables $X$ and all $i,j$, then the basic reproduction number in each patch is given by:
\begin{equation}
\mathcal{R}_0 =  \left\{ \begin{array}{ll} 
 \mathcal{R}_0^{(1)} = \cfrac{S_2^0 + \zeta S_1^0}{N_c}(\mathcal{R}_0^H  + \frac{\mathcal{R}_0^E}{\chi}) & \mbox{, \text{for the principal node}}\\ 
 \mathcal{R}_0^{(i)} =  (S_2^0 + \zeta S_1^0)(\mathcal{R}_0^H  + \mathcal{R}_0^E) & \mbox{, \text{for their satellites}} \end{array} 
 \right. \end{equation}
with $N_c$ the total population inside the central node.

\begin{theo}
Suppose that $m_{ij}=m_{ji}=m$ for all i,j=1,...,n. Then
\begin{equation}
\min_{i=1,\ldots, n} \mathcal{R}_0^{i} \le \mathcal{R}_0 \le \max_{i=1,\ldots, n} \mathcal{R}_0^{i}
\end{equation}
\end{theo}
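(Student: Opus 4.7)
The plan is to treat $\mathcal{R}_0=\rho(K)$ with $K=\mathcal{F}\mathcal{V}^{-1}$ the next-generation matrix of the coupled $n$-patch system, and to sandwich it between the patch values $\mathcal{R}_0^{(i)}$ via the Perron-Frobenius characterisation of the spectral radius, exploiting the symmetry of the equal-rate migration.

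First I would record the block structure. Ordering the infected compartments patch by patch, $\mathcal{F}=\mathrm{diag}(F_1,\ldots,F_n)$ is block diagonal (new infections occur only within a patch), while the transfer operator decomposes as $\mathcal{V}=\mathrm{diag}(V_1,\ldots,V_n)+\mathcal{L}$, where $\mathcal{L}=m(nI_n-J_n)\otimes D$ encodes the symmetric migration of the mobile infected compartments $(I,C,B)$ and $D$ is the diagonal selector of these compartments. The hypothesis $m_{ij}=m_{ji}=m$ is what makes $\mathcal{L}$ a genuine graph Laplacian: symmetric, positive semi-definite, with nonpositive off-diagonals and patch-constant vectors in its kernel. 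Consequently $\mathcal{V}$ remains a nonsingular M-matrix, $\mathcal{V}^{-1}\ge 0$, $K\ge 0$, and $K$ is irreducible because the migration graph is strongly connected.

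For the upper bound I would use the equivalence (from the Van den Driessche-Watmough framework with rescaling) that $\rho(K)\le r$ iff $r\mathcal{V}-\mathcal{F}$ is a nonsingular M-matrix, iff there exists a positive test vector $u$ with $\mathcal{F}u\le r\mathcal{V}u$ componentwise. With $r^{+}:=\max_i\mathcal{R}_0^{(i)}$ and $u_i>0$ the right Perron eigenvector of $K_i=F_iV_i^{-1}$ (so $F_iu_i=\mathcal{R}_0^{(i)}V_iu_i\le r^{+}V_iu_i$), I would stack $u=(c_1u_1,\ldots,c_nu_n)$ and use the freedom in the scalars $c_i>0$, together with the Laplacian structure of $\mathcal{L}$, to arrange that the coupling term $\mathcal{L}u$ contributes with a sign compatible with $r^{+}\mathcal{V}u-\mathcal{F}u\ge 0$ on each block. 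The lower bound $\rho(K)\ge r^{-}:=\min_i\mathcal{R}_0^{(i)}$ is obtained by the dual argument with the reverse inequality $\mathcal{F}v\ge r^{-}\mathcal{V}v$, using the left Perron vectors of the blocks.

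The main technical obstacle is the construction in the upper-bound step: the block eigenvectors $u_i$ are generally non-aligned, so the stacked vector is not in $\ker\mathcal{L}$ and the Laplacian correction $\mathcal{L}u$ can be sign-indefinite on individual coordinates. The symmetry $m_{ij}=m_{ji}$ is used precisely here: it forces $\mathcal{L}$ to be a symmetric Laplacian whose kernel is exactly the constant-across-patches direction, which provides the slack needed to rescale the $u_i$ so that $\mathcal{L}u$ enters with the correct sign in the inequality. For genuinely asymmetric migration this Laplacian structure is lost, and the sandwich bound between $\min_i\mathcal{R}_0^{(i)}$ and $\max_i\mathcal{R}_0^{(i)}$ need not hold, which is what motivates the hypothesis in the statement.
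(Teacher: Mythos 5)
The paper itself does not prove this theorem: its entire proof consists of the citation to Arino and van den Driessche's multi-city model, so there is no in-paper argument to compare yours against line by line. Judged on its own terms, your proposal picks the right toolbox (next-generation matrix of the coupled system, and the Perron--Frobenius/M-matrix characterization that $\rho(\mathcal{F}\mathcal{V}^{-1})\le r$ holds when there is a positive test vector $u$ with $\mathcal{F}u\le r\mathcal{V}u$), but it stops exactly where the content of the theorem lies. You yourself identify that the stacked vector $u=(c_1u_1,\ldots,c_nu_n)$ is generally not in $\ker\mathcal{L}$ and that $\mathcal{L}u$ is sign-indefinite coordinatewise, and you then assert that the symmetry of $\mathcal{L}$ ``provides the slack needed to rescale the $u_i$'' --- but you never exhibit the scalars $c_i$ or prove that a valid choice exists. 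That construction \emph{is} the proof; everything before it is a reduction of the statement to an equivalent one. Note in particular that the argument must survive the case where the block attaining the maximum satisfies $F_iu_i=r^{+}V_iu_i$ with equality, so there is no strict slack on that block to absorb an unfavourable sign from $\mathcal{L}u$; a cleaner target would be to show directly that $r^{+}\mathcal{V}-\mathcal{F}=\mathrm{diag}(r^{+}V_i-F_i)+r^{+}m\,\mathcal{L}$ is a (possibly singular) M-matrix, but the class of M-matrices is not closed under addition, so that too requires an argument you have not supplied.

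There is a second, independent gap. In this model the new-infection blocks of the coupled system are evaluated at the coupled disease-free equilibrium $(S_1^{*},S_2^{*})$, which is redistributed by the migration of susceptibles and does not coincide with the isolated-patch values $(S_1^{0},S_2^{0})$ entering the paper's definition of $\mathcal{R}_0^{(i)}$ (given just above the theorem for $m^{X}_{ji}=0$), unless all patches have identical recruitment. Consequently the identity $F_iu_i=\mathcal{R}_0^{(i)}V_iu_i$, on which both your upper and lower bounds rest, is false as stated: the $F_i$ appearing in the coupled next-generation matrix are not the isolated-patch blocks. A complete proof must either redefine $\mathcal{R}_0^{(i)}$ at the coupled equilibrium or explicitly control the discrepancy between the two equilibria; the cited Arino--van den Driessche result does this bookkeeping for a specific travel structure. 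In short: plausible strategy, but the decisive test-vector construction is missing and one of your starting identities does not hold for this model.
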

\begin{proof}
See \cite{arinocity}.
\end{proof}
\subsection{Sensitivity analysis}
\begin{figure}[H]
\centering
\includegraphics[width=0.49\textwidth]{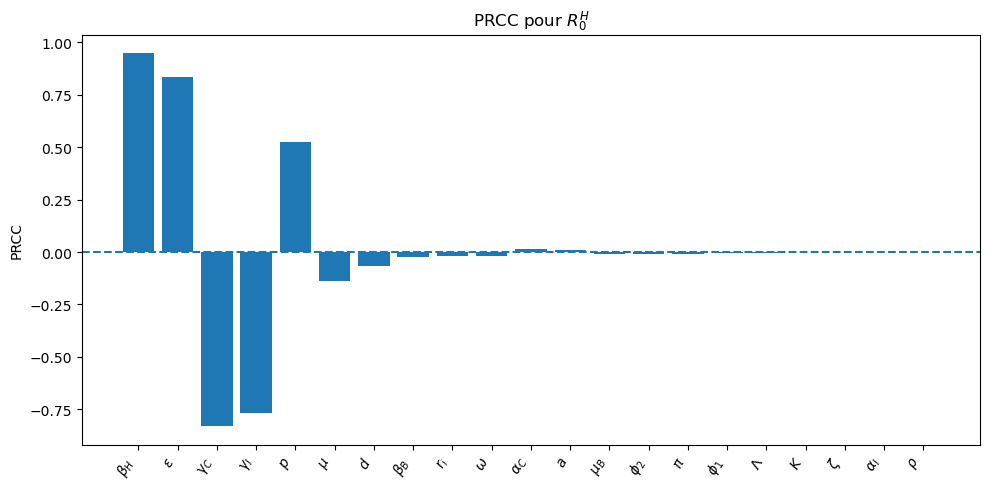}\hfill
\includegraphics[width=0.49\textwidth]{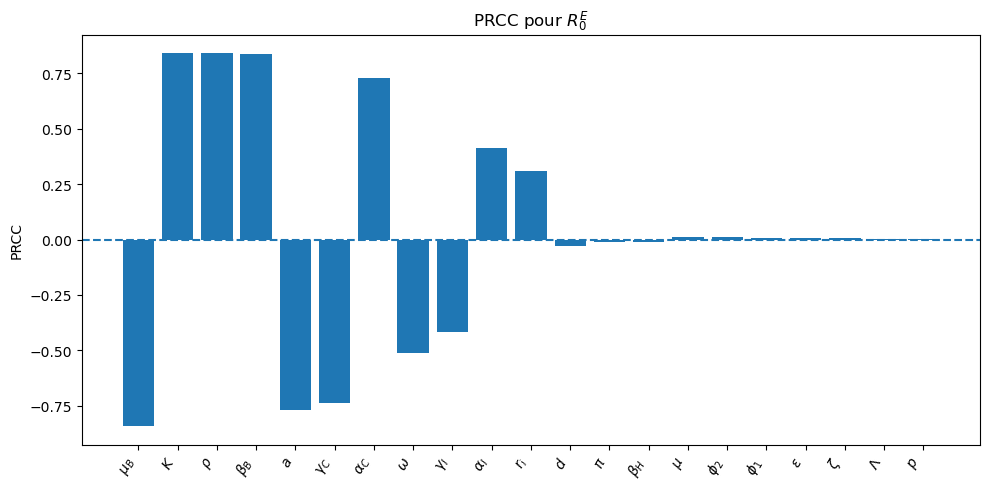}
\includegraphics[width=0.7\textwidth]{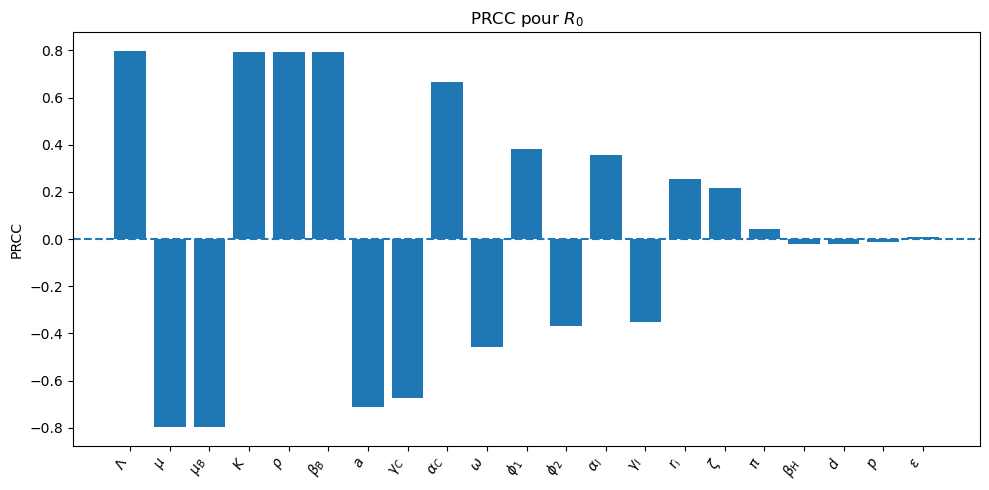}
\caption{Sensitivity of the number of cholera cases to changes in parameters in Tables 4 and 7 as computed by the Latin Hypercube Sampling-Partial Rank Correlation Coefficient (LHS-PRCC) index.}
\end{figure}
Figures $\mathcal{R}_0$, $\mathcal{R}_0^E$, and $\mathcal{R}_0^H$ display the results of a global sensitivity analysis of the basic reproduction number and its components using the LHS-PRCC method with uniform  parameter ranges. The overall $\mathcal{R}_0$ is most sensitive to environmental parameters, with $K$, $\rho$, and $\beta_B$  showing strong positive correlations and $\mu_B$, $a$, and $\omega$ showing strong negative effects, highlighting the central role of the environmental reservoir. For $\mathcal{R}_0^E$, sensitivity is dominated by bacterial shedding and clearance processes, whereas for $\mathcal{R}_0^H$, contact rate $\beta_H$, efficiency $\varepsilon$, and the proportion symptomatic $p$ are positively correlated, while recovery rates $\gamma_C$ and $\gamma_I$ have strong negative influence. These results emphasize that both sanitation/food safety and clinical interventions are essential to reducing cholera transmission.

\section{Model Application}\label{sec:Simulation}
This section presents numerical simulations that reflect local transmission patterns, spatial connectivity, and food-related vulnerability.

\subsection{Numerical simulations}
The initial conditions reflect a scenario where most individuals are susceptible, with a few infectious cases in the central node and no contamination in peripheral patches.
However, to assess the importance of dynamically modeling nutritional vulnerability, we simulated the model under varying transition rate $\phi_1$ between well-nourished and vulnerable individuals. Figure \ref{fig1} shows that epidemic size and mortality are highly sensitive to these parameters. In particular, increasing the rate $\phi_1$ at which individuals become vulnerable significantly amplifies both infection and death peaks. This justifies the inclusion of two susceptible compartments ($S_1$, $S_2$), as a single-compartment model with a static risk modifier would fail to capture these emergent dynamics.
\begin{figure}[H]
\centering
\includegraphics[width=0.49\textwidth]{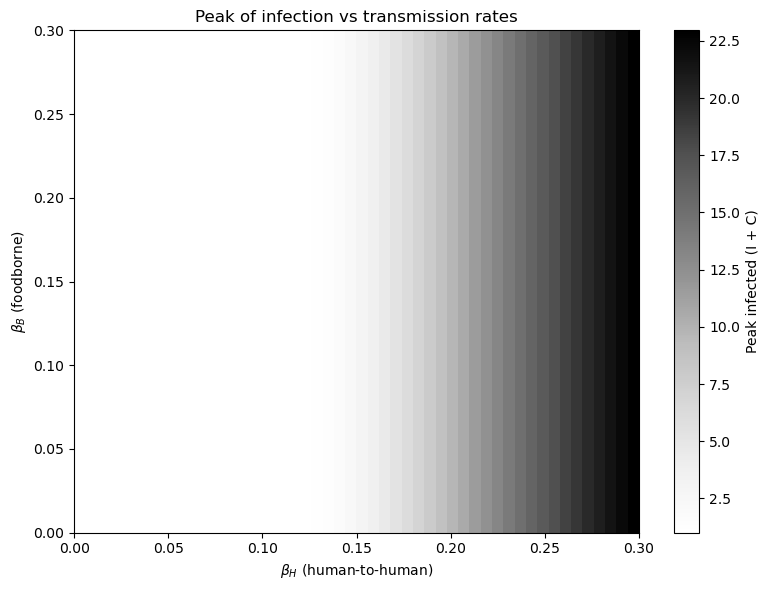}
\includegraphics[width=0.49\textwidth]{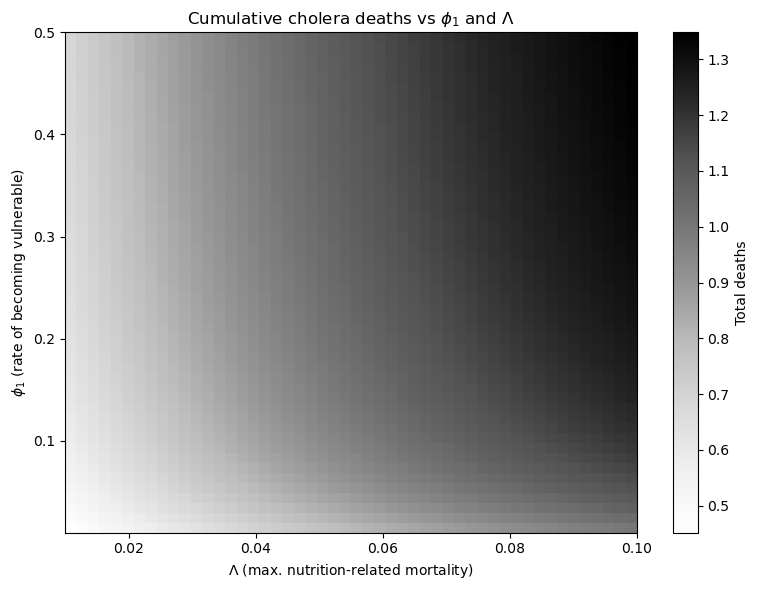}
\label{fig1}
\caption{Impact of mode of transmission inside infection transmission at right and effects of $\phi_1$ and $\Lambda$ on deaths dynamics at left.}
\end{figure}
Moreover, it confirms that nutritional transitions drive epidemic severity and that one class susceptible model would not capture these nonlinear effects.

Figure \ref{fig2} illustrates periodic dynamics in the population, which may indicate the presence of a limit cycle in the model. But, after a better observation, we notice a pseudo periodicity as well as a variation of the amplitudes of the variables which reflects other dynamic properties of system \eqref{systeme} probably due to the mobility of the infected which causes an emergence of the disease in healthy areas which is better observe in Fig \ref{fig:simulation}.
\begin{figure}[H]
\centering
\includegraphics[width=0.49\textwidth]{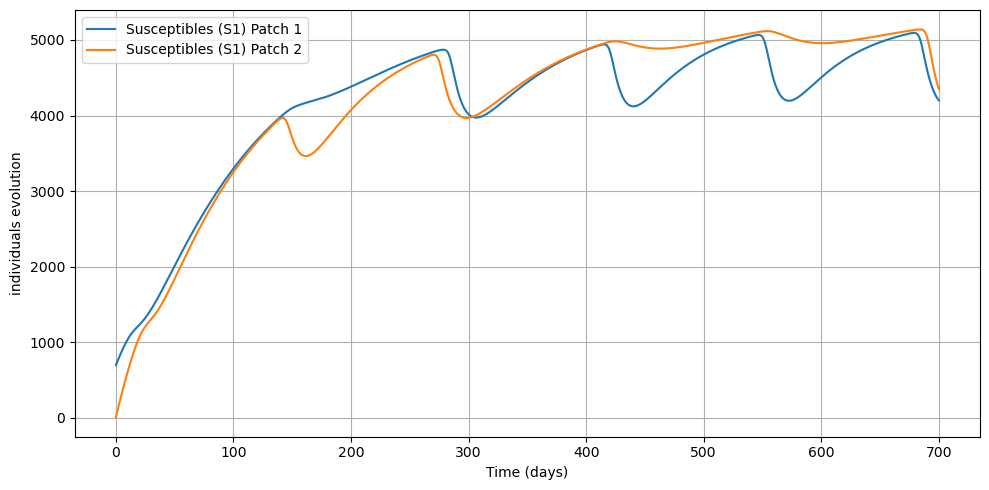}
\includegraphics[width=0.49\textwidth]{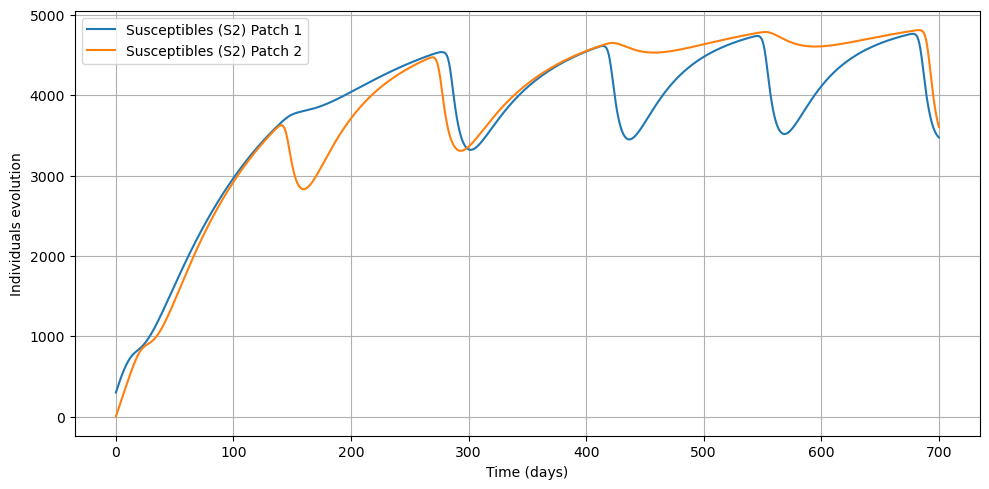}
\includegraphics[width=0.49\textwidth]{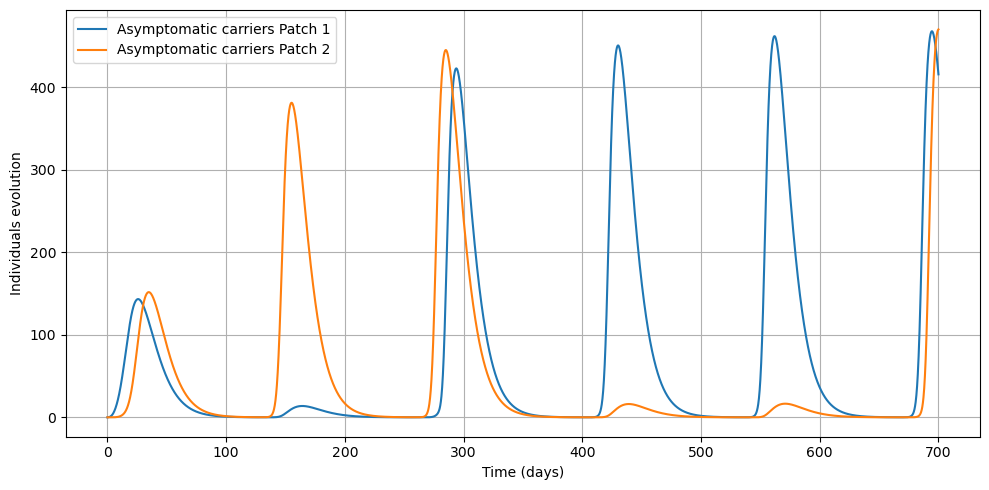}
\includegraphics[width=0.49\textwidth]{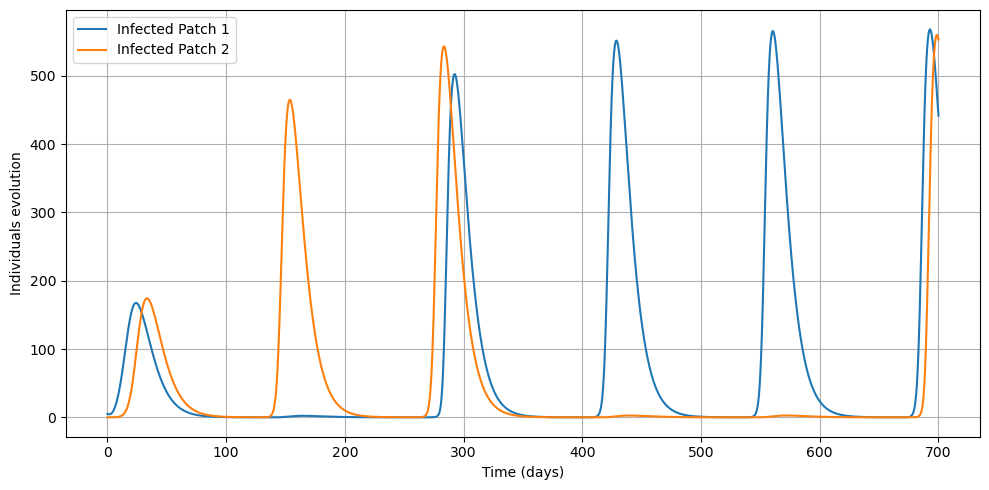}
\includegraphics[width=0.49\textwidth]{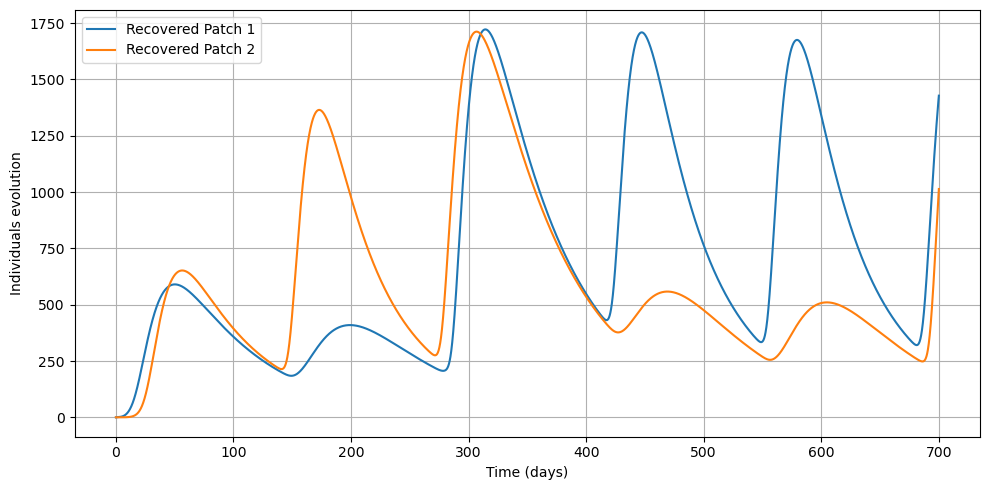}
\includegraphics[width=0.49\textwidth]{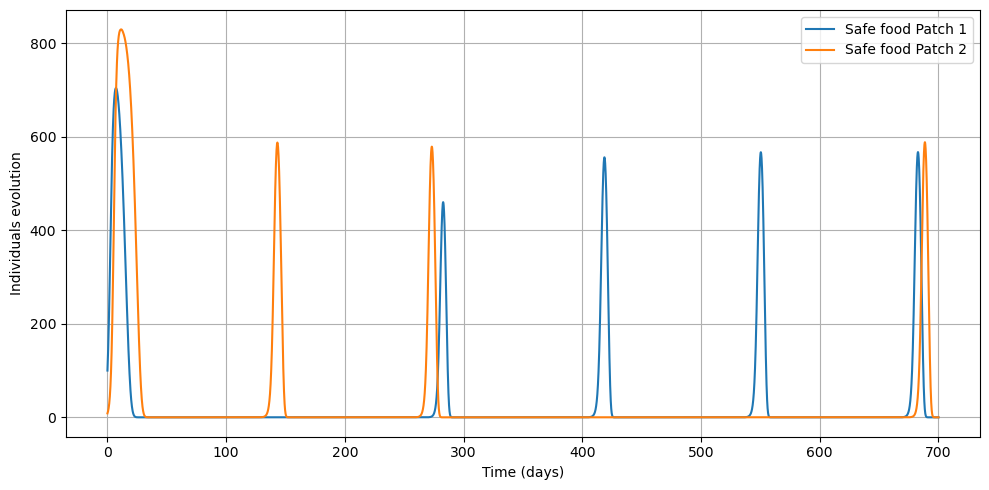}
\includegraphics[width=0.49\textwidth]{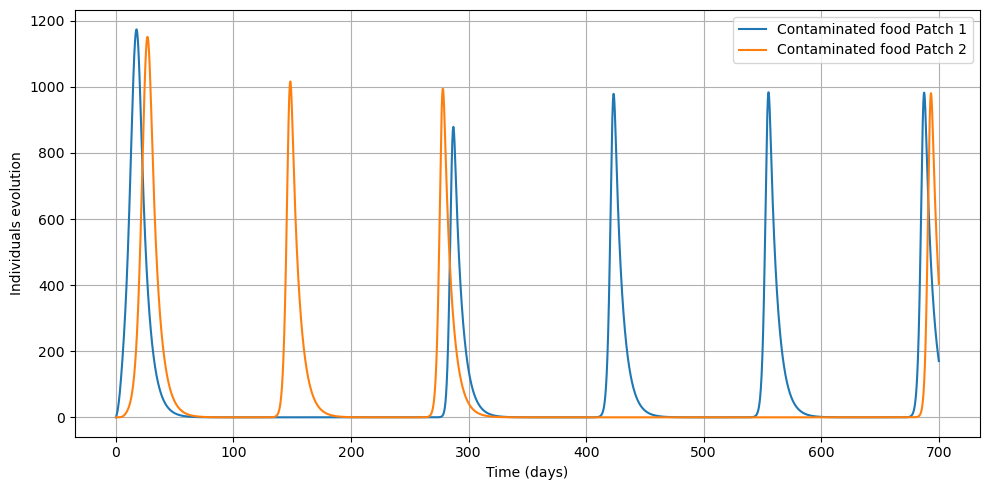}
\includegraphics[width=0.49\textwidth]{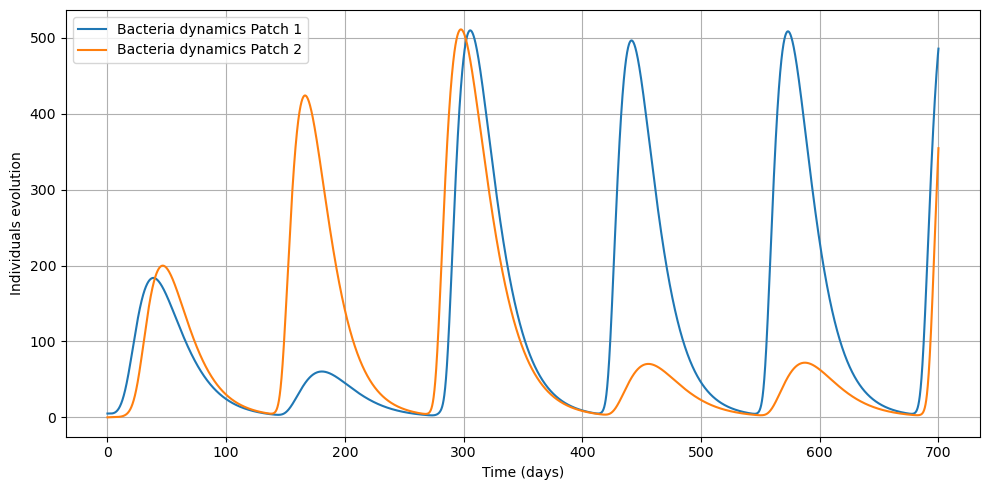}
\caption{Dynamics of model \eqref{systeme} with initial conditions: $S_1(0) = 700, S_2(0)=300, C(0) = 0, I(0)=5, R(0)=0, F^s(0)=100, F^c(0) = 0$ and $B(0)=0$ with $\mathcal{R}_0 = 31.7508$.}
\label{fig2}
\end{figure}

\subsection{Case study of Littoral (Douala) and its Surroundings Areas}

The modeling processes start inside two regions (SU, SW) and finish at  (Littoral and South-west regions) cover an area of more than 100000 km$^2$.
However, we focus the study inside Littoral region (particularly on the city of Douala) and its neighbors, which recorded one of the highest cholera case-fatality rates in Cameroon during the 2022 outbreak ($\approx 3\%$), and lies adjacent to the South-West region, the most affected area that epidemic. In a context marked by sociopolitical instability and high population displacement from conflict-affected \textquotedblleft anglophone\textquotedblright \, regions, Littoral serves as the main entry point for migrants from the South-West into \textquotedblleft francophone\textquotedblright \, urban zones. This demographic pressure, combined with preexisting challenges related to water, sanitation, and food safety, creates conditions conducive to cholera transmission and persistence.

To capture both local dynamics and spatial interactions, we consider a multi-patch framework composed of four interconnected zones:
\begin{itemize}
\item a central node representing Douala,
\item and three surrounding patches: Bonaberi, Bomono, and Yassa connected to Douala through human movement and food exchange.
 \item $\mu_B(T) = \tilde{\mu_B}\left(1-\kappa\cfrac{T - \bar{T}}{T_{\max} - \bar{T}}\right)$, where the temperature is in degree Celsius. $\kappa$ represents the dependency on temperature, ($\kappa = 3$ according to \cite{Bertuzzo2010}), $T_{\max}$ and $\bar{T}$ correspond respectively to the maximum and mean temperature of Douala city over the 20 years.
\end{itemize}

 This structure reflects the urban-peripheral gradient and allows us to account for spatial heterogeneity in exposure, vulnerability, and mobility, essential to understand the spread and control of cholera in the Douala metropolitan area.
  \begin{figure}[H]
  \centering
  \includegraphics[scale=0.45]{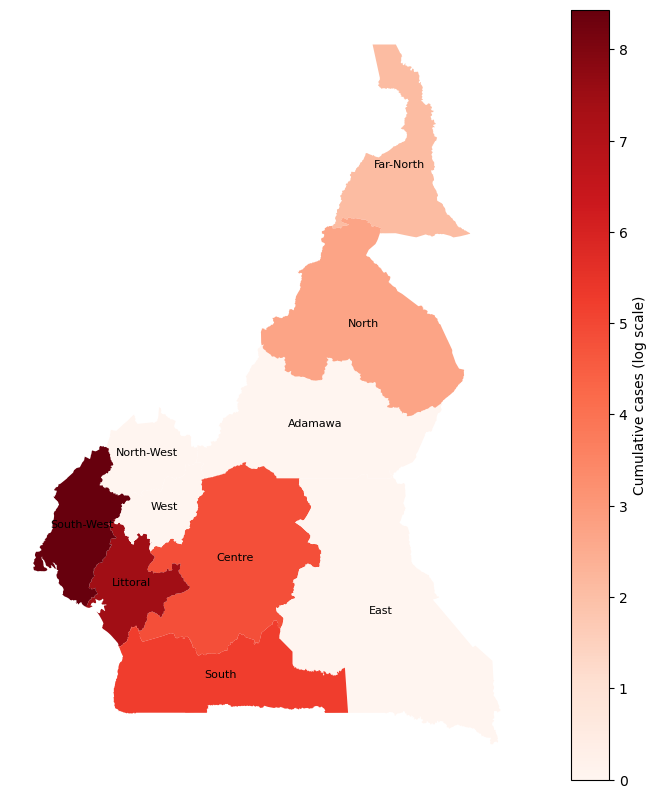}
  \includegraphics[scale=0.45]{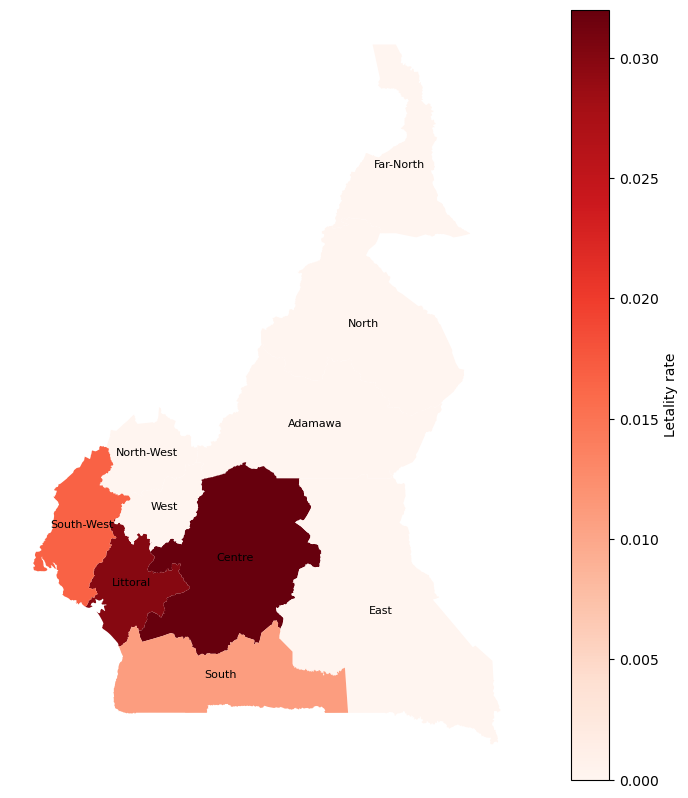} 
  \caption{Cumulative cholera cases repartition at the end of the 2022's epidemic in Cameroon according to WHO Database at left and letality rate due to the disease at right.}
   \label{fig:real-data}
  \end{figure}
 The left panel shows the cumulative distribution of reported cholera cases in Cameroon at the end of the 2022 epidemic, based on WHO data. The highest case counts are concentrated in the South-West and Littoral regions, reflecting intense transmission in these areas during the outbreak. The right panel depicts the corresponding case fatality rates (CFR). The Centre region exhibits the highest CFR, followed by the Littoral region. According to WHO reports, the outbreak began in the Centre region, which may explain the elevated CFR there early in the epidemic, limited preparedness and delayed response could have contributed to higher mortality. The spatial mismatch between incidence and CFR underscores that regions with fewer cases can still experience severe outcomes if access to timely treatment is inadequate. 

  \begin{figure}[H]
  \centering
  \includegraphics[scale=0.46]{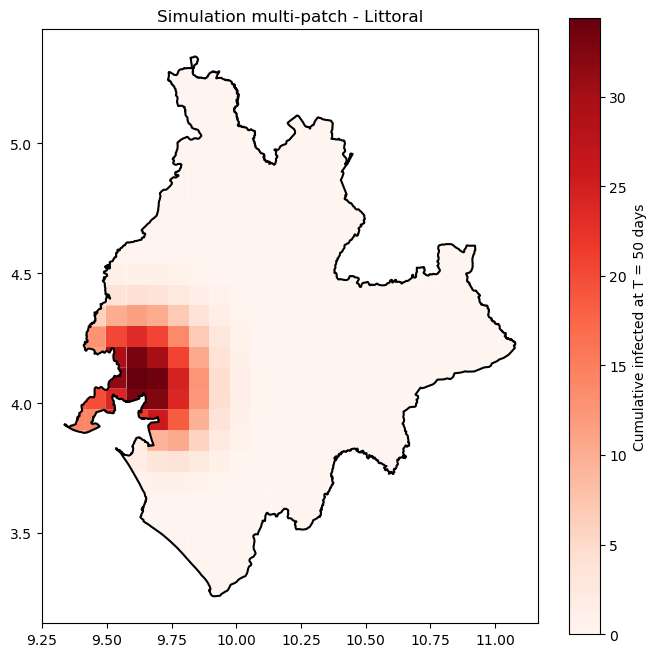}
  \includegraphics[scale=0.46]{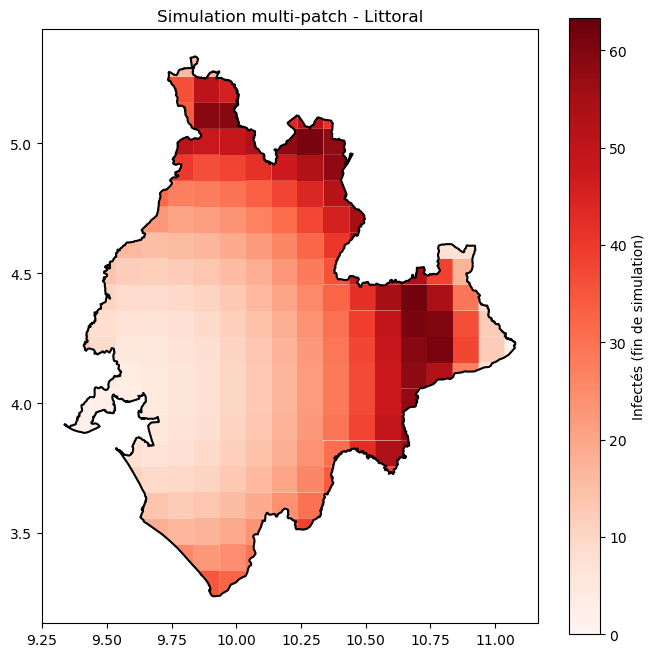}
  \caption{Propagation of the disease in Littoral region base on SCIRW-F \eqref{systeme} simulations during one year.}
  \label{fig:simulation}
  \end{figure}
 The simulated spatial spread (Figure \ref{fig:simulation}) of cholera in the Littoral region over one year, based on the SCIRW-F multi-patch system \eqref{systeme}, is shown for two time points. At left we represent the early phase of the epidemic, when infection is initially concentrated in central urban areas (notably Douala), before gradually diffusing to peripheral patches. The right panel shows the later stage of the outbreak, where high prevalence is observed in both central and peripheral zones due to human mobility and contaminated food exchange. This simulated pattern reproduces the observed urban to peripheral gradient seen in Figure \ref{fig:real-data} and emphasizes the role of spatial connectivity and vulnerability in sustaining transmission.

\section{Discussions and Perspectives}\label{sec:perspectives}
In this study, we developed a novel mathematical model to understand cholera transmission in the context of environmental contamination and food insecurity. The model progressively incorporates the key pathways of cholera spread from basic SIR dynamics, to waterborne transmission (SIWR), and finally to food/biomass mediated transmission via the proposed SIWR-F \eqref{sys} framework.

Moreover, the multi-patch extension \eqref{systeme} captures the geographic and demographic heterogeneity of cholera dynamics, emphasizing the role of migration, patch connectivity, and spatially varying food availability. Simulations based on the Douala metropolitan area illustrate how vulnerable populations in connected patches can act as persistent sources or sinks for the disease.

In conclusion, this work provides a comprehensive framework to analyze cholera outbreaks in resource-limited settings, showing that food insecurity is not just a background condition but a core driver of epidemic dynamics. The model can inform targeted interventions that integrate food, water, and mobility data paying the way toward more resilient health systems in vulnerable regions.

As possible extensions of this work, one can assume without loss of generality to extend this model \eqref{systeme} over several years (given that cholera is a disease in several countries of sub-Saharan Africa) by change with parameters time dependent and also explore an optimal control.

\section{Appendix: Mathematical proofs and tools}
\paragraph*{Appendix A}\textbf{Proof of Theorem} \ref{th:persistence}\\
\medskip\noindent Step 1: linearization on the disease-free boundary and the invasion criterion.\\
Consider the infectious subsystem \((i,w)\) when \(s\) and \(f\) are frozen at the boundary values \(s^*=1\) and \(f^*\). The linearization at the DFE \((s,i,w,r,f)=(1,0,0,0,f^*)\) restricted to \((i,w)\) is
\[
\begin{pmatrix}\dot i\\\dot w\end{pmatrix}
= A \begin{pmatrix} i\\ w\end{pmatrix},\qquad
A:=\begin{pmatrix}
\beta_I-(\gamma+\mu) & \beta_W(1+f^*)\\[4pt]
\xi & -\xi
\end{pmatrix}.
\]
Let \(p(\lambda)=\det(A-\lambda I)\) be the characteristic polynomial. Evaluating at \(\lambda=0\) we get
\[
p(0)=\det\begin{pmatrix}\beta_I-(\gamma+\mu) & \beta_W(1+f^*)\\ \xi & -\xi\end{pmatrix}
= -\xi\big(\beta_I+\beta_W(1+f^*)-(\gamma+\mu)\big).
\]
Hence \(p(0)<0\) if and only if \(\beta_I+\beta_W(1+f^*)-(\gamma+\mu)>0\), i.e. \(R_0>1\).
But \(p(\lambda)\to+\infty\) as \(\lambda\to+\infty\), so if \(p(0)<0\) there exists a positive real root \(\lambda_0>0\) of \(p(\lambda)\). Thus \(A\) has a positive eigenvalue \(\lambda_0>0\). The corresponding eigenvector can be chosen strictly positive because off-diagonal entries of \(A\) are nonnegative and \(A\) is cooperative on \(\mathbb{R}^2_+\).

Therefore \(\mathcal{R}_0>1\) is equivalent to the linearized infectious subsystem at the DFE having exponential growth (positive principal eigenvalue), i.e. the DFE is linearly unstable w.r.t. infectious perturbations.

\medskip\noindent\textbf{Step 2: local exponential growth from small infections (linear comparison).}
Because \(K\) is compact and the vector field is continuous, there exists a neighborhood \(U\subset K\) of the DFE boundary point \((1,0,0,0,f^*)\) such that, for all \((s,f)\) with \(|s-1|<\delta\), \(|f-f^*|<\delta\) (for some small \(\delta>0\)), the Jacobian matrix of the \((i,w)\)-subsystem
\[
J(s,f):=\begin{pmatrix}
\beta_I s-(\gamma+\mu) & \beta_W(1+f)s\\[4pt]
\xi & -\xi
\end{pmatrix}
\]
satisfies
\[
J(s,f)\ge A - \eta I
\]
(componentwise) for some small \(\eta\in(0,\lambda_0/2)\). (This follows from continuity of the entries of \(J\) in \((s,f)\).) The inequality is understood entrywise; the system for \((i,w)\) is cooperative (all off-diagonal infection/transmission terms are nonnegative), hence one may apply the comparison principle for cooperative linear systems.

Fix \(\eta\in(0,\lambda_0/2)\). By continuity there exists \(\delta>0\) so that whenever \(|s(t)-1|<\delta\) and \(|f(t)-f^*|<\delta\) the inequality above holds. Define the exit time
\[
\tau:=\inf\{t\ge 0:\ |s(t)-1|\ge\delta\ \text{ or }\ |f(t)-f^*|\ge\delta\}.
\]
On the time interval \([0,\tau)\) the vector \((i(t),w(t))^\top\) satisfies the differential inequality (entrywise)
\[
\frac{d}{dt}\begin{pmatrix}i\\w\end{pmatrix} \ge (A-\eta I)\begin{pmatrix}i\\w\end{pmatrix}.
\]
By standard linear comparison, we then have for \(t\in[0,\tau)\)
\[
\begin{pmatrix}i(t)\\w(t)\end{pmatrix} \ge e^{(A-\eta I)t}\begin{pmatrix}i(0)\\w(0)\end{pmatrix}.
\]
Because \(A\) has principal eigenvalue \(\lambda_0>\,2\eta\), the matrix \(A-\eta I\) has a positive growth exponent \(\lambda_0-\eta>0\). Hence there exists \(T>0\) (independent of sufficiently small initial infectious perturbations) and a positive vector \(\phi\gg 0\) such that
\[
\begin{pmatrix}i(T)\\w(T)\end{pmatrix} \ge C\,\phi
\]
for some \(C>0\) proportional to the size of the initial condition. In particular, any nontrivial initial infection (no matter how small) is amplified while the trajectory remains in the neighborhood \(U\).

Two remarks are in order: (i) for initial data not already in \(U\), after some finite time the solution will enter a compact set where the same local argument works (use that solutions cannot stay forever in a small neighborhood of the boundary without the linearization forcing them away); (ii) the cooperative structure ensures the constants \(T,C,\phi\) can be chosen uniformly for all small initial infectious states in a small ball (uniformity needed for the next step).

\medskip\noindent\textbf{Step 3: uniform weak persistence.}
From Step 2 one deduces there exists \(\delta_1>0\) and \(T_1>0\) such that for every solution with \(i(0)>0\) sufficiently small,
\[
\sup_{t\ge 0} i(t) \ge \delta_1.
\]
By compactness of the absorbing set \(K\) and continuity of the flow, the threshold \(\delta_1\) can be chosen uniformly for all initial data in the compact set \(K\setminus\{i=0\}\). This is the assertion of \emph{uniform weak persistence}: there exists \(\varepsilon_1>0\) so that for every solution with \(i(0)>0\) we have \(\limsup_{t\to\infty} i(t)\ge \varepsilon_1\).

\medskip\noindent\textbf{Step 4: upgrade to uniform strong persistence.}
We now use a standard result from persistence theory (see e.g.\ the persistence theorems of Hale \& Waltman, Butler \& Waltman, or the corresponding theorems in Thieme's framework): for a semiflow on a compact positively invariant set \(K\), if the boundary set \(K_0:=\{x\in K:\ i=0\}\) is invariant and is a \emph{uniform weak repeller} (equivalently the system is uniformly weakly persistent), and if \(K_0\) satisfies a mild compactness/acyclicity condition (no complicated chain-recurrent dynamics connecting boundary invariant sets), then weak uniform persistence implies {\em strong uniform persistence}. In our situation the semiflow is defined on the compact set \(K\), the boundary \(K_0\) is invariant (if \(i(0)=0\) then \(i(t)\equiv 0\)), and linear instability of the DFE together with the compactness of \(K\) rules out the pathological boundary chain-recurrence required to block the upgrade (this is exactly the set of hypotheses used in \cite{thieme1992persistence} Theorem 1.3).

Hence there exists \(\varepsilon>0\) such that for every solution with \(i(0)>0\),
\[
\liminf_{t\to\infty} i(t) \ge \varepsilon.
\]
This completes the proof of Theorem \ref{th:persistence}.
\paragraph*{Appendix B}

Herein, we present some mathematical tools used before.
\begin{lem}\label{lem1}Consider any square matrix in the form of 
\begin{equation*}
V = \begin{bmatrix}
A & B\\
C & D
\end{bmatrix}, 
\end{equation*}
where A, B, C and D are matrix blocks, with A and D being square. Then, the matrix
V is invertible if and only if A and $D- CA^{-1}B$ are invertible, and $V^{-1}$ is given by

\begin{equation*}
 V^{-1}= \begin{bmatrix}
A^{-1}+A^{-1}B(D-CA^{-1}B)^{-1} & -A^{-1}B(D-CA^{-1}B)^{-1}\\
-(D-CA^{-1}B)^{-1}CA^{-1} & (D-CA^{-1}B)^{-1}
\end{bmatrix}.
\end{equation*}
\end{lem}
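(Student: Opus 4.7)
The plan is to establish the formula via a block LU (or LDU) decomposition of $V$ based on the Schur complement $S := D - CA^{-1}B$. Assuming $A$ is invertible (which is forced by the appearance of $A^{-1}$ in the statement), I would first verify by direct block multiplication the factorization
\[
V = \begin{bmatrix} I & 0 \\ CA^{-1} & I \end{bmatrix} \begin{bmatrix} A & 0 \\ 0 & S \end{bmatrix} \begin{bmatrix} I & A^{-1}B \\ 0 & I \end{bmatrix}.
\]
The $(1,1)$-block evaluates to $A$, the $(1,2)$-block to $A \cdot A^{-1}B = B$, the $(2,1)$-block to $CA^{-1}\cdot A = C$, and the $(2,2)$-block to $CA^{-1}B + S = D$, so the factorization is valid.

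From this factorization the invertibility equivalence is immediate: the outer two factors are block triangular with identity diagonal blocks, hence have unit determinant, so $\det V = \det A \cdot \det S$. Therefore, under the standing assumption that $A$ is invertible, $V$ is invertible if and only if $S = D - CA^{-1}B$ is invertible.

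For the explicit inverse, each of the three factors is either block diagonal or block unitriangular, so its inverse is immediate (negate the off-diagonal block, or invert each diagonal block). Taking the product in reverse order gives
\[
V^{-1} = \begin{bmatrix} I & -A^{-1}B \\ 0 & I \end{bmatrix} \begin{bmatrix} A^{-1} & 0 \\ 0 & S^{-1} \end{bmatrix} \begin{bmatrix} I & 0 \\ -CA^{-1} & I \end{bmatrix},
\]
and performing the two remaining block products yields the closed form in the statement, with $(1,1)$-block $A^{-1} + A^{-1}B\,S^{-1}CA^{-1}$, $(1,2)$-block $-A^{-1}B\,S^{-1}$, $(2,1)$-block $-S^{-1}CA^{-1}$, and $(2,2)$-block $S^{-1}$.

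The main obstacle is purely bookkeeping: block matrix products must respect the non-commutativity of $A$, $B$, $C$, $D$, and sign conventions must be tracked carefully when inverting the unitriangular factors. The only mildly conceptual point is the \emph{if and only if} on invertibility, which is most cleanly settled by the determinant identity $\det V = \det A \cdot \det S$ extracted from the factorization, rather than by a direct rank argument.
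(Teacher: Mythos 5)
Your argument is correct and complete: the block LDU factorization through the Schur complement $S=D-CA^{-1}B$, the determinant identity $\det V=\det A\cdot\det S$, and the product of the three easily inverted factors together give exactly the standard proof of this identity. The paper itself offers no proof --- the lemma appears in Appendix B merely as a stated ``mathematical tool'' --- so there is nothing to compare against; your write-up supplies what the paper omits. Two small observations. First, you are right to flag that the ``if and only if'' only makes sense under the standing hypothesis that $A$ is invertible (otherwise $A^{-1}$ is undefined); the lemma as printed is sloppy on this point and your reading is the correct repair. Second, your computation actually exposes a typographical error in the paper's statement: the $(1,1)$ block of $V^{-1}$ must be $A^{-1}+A^{-1}B\,(D-CA^{-1}B)^{-1}CA^{-1}$, with the trailing factor $CA^{-1}$, exactly as your multiplication produces; the paper's displayed formula drops that factor, and one can check directly that without it the product $VV^{-1}$ fails to give the identity in the $(1,1)$ position. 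So your proof is not only valid but corrects the statement it was asked to establish.
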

\begin{lem}\label{lem2}
Let M be a square Metzler matrix written in block form \begin{align*}
M & = \begin{pmatrix}
A & B\\
C & D
\end{pmatrix},
\end{align*}
where A and D are square matrices.
Then, the matrix M is Metzler stable if and only if matrices A
and $D - CA^{-1}B \,\,\ (\mbox{or}\,\, D\,\, \mbox{and}\,\, A - CD^{-1}B)$ are Metzler stable.
\end{lem}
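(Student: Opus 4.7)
\noindent The plan is to work with the characterization of Metzler stability through positive vectors: a Metzler matrix $N$ is stable if and only if $-N$ is a nonsingular M-matrix, equivalently there exists a strictly positive vector $v\gg 0$ with $Nv\ll 0$, equivalently $N^{-1}\le 0$ (entrywise). I will first record the sign structure coming from the Metzler hypothesis on $M$: the off-diagonal blocks $B$ and $C$ are entrywise nonnegative, and the diagonal blocks $A$ and $D$ are themselves Metzler. This is the only place the block Metzler hypothesis enters, and it is what makes the Schur complement sign-friendly.

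\noindent For the forward direction, stability of $A$ is immediate: a principal submatrix of a stable Metzler matrix is stable Metzler (standard for M-matrices). Given $A$ stable, $A^{-1}\le 0$ entrywise, hence $-CA^{-1}B\ge 0$. Adding this to the Metzler matrix $D$ keeps the off-diagonal entries nonnegative, so $D-CA^{-1}B$ is Metzler. To prove it is stable, I pick $v=(v_1,v_2)\gg 0$ with $Mv\ll 0$, coming from stability of $M$. The top block equation $Av_1+Bv_2\ll 0$ gives $A^{-1}(Av_1+Bv_2)\ge 0$ (since $A^{-1}\le 0$ and $Av_1+Bv_2\ll 0$), so $A^{-1}Bv_2\ge -v_1$. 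Multiplying by $C\ge 0$ yields $CA^{-1}Bv_2\ge -Cv_1$, and then the bottom block equation $Cv_1+Dv_2\ll 0$ delivers $(D-CA^{-1}B)v_2 = Dv_2 - CA^{-1}Bv_2 \le Dv_2+Cv_1\ll 0$. Since $v_2\gg 0$, this certifies that $-(D-CA^{-1}B)$ is a nonsingular M-matrix.

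\noindent For the converse, suppose $A$ and the Schur complement $S:=D-CA^{-1}B$ are both Metzler stable. Pick $u\gg 0$ with $Au\ll 0$ and $w\gg 0$ with $Sw\ll 0$. I will construct $v=(v_1,v_2)\gg 0$ with $Mv\ll 0$ by setting $v_2:=w$ and $v_1:=-A^{-1}Bw+\varepsilon u$ for a small $\varepsilon>0$. Positivity of $v_1$ is built in: $-A^{-1}Bw\ge 0$ because $-A^{-1}\ge 0$ and $Bw\ge 0$, and $\varepsilon u\gg 0$. Then $Av_1+Bv_2=A(-A^{-1}Bw)+\varepsilon Au+Bw=\varepsilon Au\ll 0$, and $Cv_1+Dv_2=Sw+\varepsilon Cu$, which is $\ll 0$ for $\varepsilon$ sufficiently small since $Sw\ll 0$ strictly and $Cu\ge 0$. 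This produces the required positive vector and proves $M$ is Metzler stable. The symmetric statement with $D$ and $A-CD^{-1}B$ follows by swapping the roles of the two blocks (equivalently, conjugating by the block permutation that interchanges the two diagonal blocks), so no separate argument is needed.

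\noindent The only slightly delicate point, and the one I would double-check carefully, is the sign manipulation in the forward direction: inverting $A$ and applying it to the strict inequality $Av_1+Bv_2\ll 0$ reverses the sense only in the sense that $A^{-1}$ has nonpositive entries, so one works with $A^{-1}(Av_1+Bv_2)\ge 0$ rather than a pointwise reversal. Once this is cleanly written, the rest is bookkeeping with nonnegative/nonpositive matrix products.
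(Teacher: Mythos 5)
Your proof is correct. Note first that the paper itself offers no proof of this lemma: it is stated in Appendix B as a known ``mathematical tool'' (a standard fact about Metzler matrices and their Schur complements, in the spirit of the Kamgang--Sallet framework cited in Appendix C), so there is no argument in the paper to compare against. Your argument, based on the characterization that a Metzler matrix $N$ is stable if and only if there exists $v\gg 0$ with $Nv\ll 0$ (equivalently $-N$ is a nonsingular M-matrix, equivalently $N^{-1}\le 0$ entrywise), is a clean and complete way to supply the missing proof. The forward direction correctly uses that $A$ is a principal submatrix (hence stable, hence $A^{-1}\le 0$), that $B,C\ge 0$ by the Metzler hypothesis so the Schur complement stays Metzler, and the vector manipulation $(D-CA^{-1}B)v_2\le Dv_2+Cv_1\ll 0$ is valid since multiplying entrywise inequalities by the nonnegative matrix $C$ preserves them. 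The converse construction $v_1=-A^{-1}Bw+\varepsilon u$, $v_2=w$ is the standard and correct way to assemble a positive vector certifying stability of $M$; the only hypotheses used are $-A^{-1}\ge 0$, $B,C\ge 0$, and strictness of $Sw\ll 0$ to absorb the $\varepsilon Cu$ term. The symmetric version with $D$ and $A-CD^{-1}B$ indeed follows by permuting the blocks. One could shorten the forward direction slightly by observing directly that $Av_1\le Av_1+Bv_2\ll 0$ gives stability of $A$ without invoking the principal-submatrix fact as a black box, but as written the argument is sound.
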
 

\paragraph*{Appendix C: Theorem of \textbf{Kamgang and Sallet}}
\noindent

Herein, we present the results of \textbf{Kamgang and Sallet}  on the global asymptotic stability of a class of epidemiological models.
\begin{theo}
\textbf{Kamgang and Sallet} 

Consider the following clans of epidemiological model: 

\begin{equation}
\left\{
\begin{array}{ll}
\dot{x_{1}}=A_{1}(x)(x_{1}-x_{1}^{*})+A_{12}(x)x_{2},\\
\dot{x_{2}}=A_{2}(x)x_{2}.
\end{array}
\right.
\label{s}
\end{equation} 
The following conditions $H_{1} - H_{5}$ below must be met to guarantee the GAS of the equilibrium.

$\bullet H_{1}$: Model system is defined in a positively invariant subset D of \,\ $\Omega$ and its dissipative
in D.

$\bullet H_{2}$: The sub-system $A_{1}(x)(x_{1}-x_{1}^{0})$ is globally asymptotically stable at the equilibrium $x_{1}^{0}$
in the canonical projection of 
$\Omega$  on D.

$\bullet H_{3}$: The matrix $A_{2}(x)$ is Metzler (A Metzler matrix is a matrix with all-diagonal entices
non-negative and irreducible for any given x $\in$ D.

$\bullet H_{4}$: There exists an upper bound matrix $\bar{A}_{2}$ for $\mathcal{M} =\{A_{2}(x), \,\,\ x \in D \}$; with the property
that if $\bar{A}_{2} \in \mathcal{M}$ (i.e: $A_{2} = max \mathcal{M}$)\,\, then, for any $\bar{x} \in  D- \{0\}$ (i.e. The points where the
maximum is realized are contained in the disease free sub-manifold).

$\bullet H_{5}$: The largest real part of the eigenvalue of $\bar{A}_{2}$ denoted by $\alpha(\bar{A}_{2})$ has to be negative.
\end{theo}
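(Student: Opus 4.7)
The approach is a Metzler-comparison argument coupled with an asymptotic-autonomy step. The plan is to exploit $H_{3}$--$H_{5}$ to force $x_{2}(t)\to 0$, then use $H_{2}$ to drag $x_{1}(t)$ to $x_{1}^{0}$, and finally combine this with $H_{1}$ and a LaSalle-type closure to upgrade attractivity into genuine global asymptotic stability of the disease-free equilibrium $(x_{1}^{0},0)$.

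First I would fix an initial datum in $D$ and use $H_{1}$ to confine the trajectory to a forward-invariant compact set $\mathcal{K}\subset D$. By $H_{3}$ each matrix $A_{2}(x(t))$ is Metzler and irreducible, and by $H_{4}$ one has the entrywise bound $A_{2}(x(t))\le \bar{A}_{2}$ with equality only on the disease-free sub-manifold. The cooperative structure of $\dot{x}_{2}=A_{2}(x(t))x_{2}$ thus permits a Kamke--M\"uller comparison, giving
\[
0\le x_{2}(t)\le e^{\bar{A}_{2}t}x_{2}(0).
\]
By $H_{5}$ the stability modulus $\alpha(\bar{A}_{2})$ is negative, and Perron--Frobenius theory for irreducible Metzler matrices produces a strictly positive left eigenvector $v^{\ast}\gg 0$ satisfying $v^{\ast}\bar{A}_{2}=\alpha(\bar{A}_{2})v^{\ast}$. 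Taking $V(x_{2}):=\langle v^{\ast},x_{2}\rangle$ as a linear Lyapunov candidate, one checks $\dot{V}\le \langle v^{\ast},\bar{A}_{2}x_{2}\rangle =\alpha(\bar{A}_{2})V$, which yields exponential decay of $V$ and hence of $x_{2}(t)$.

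The second step is to transfer this information to the $x_{1}$-component. As $x_{2}(t)\to 0$, the equation $\dot{x}_{1}=A_{1}(x)(x_{1}-x_{1}^{0})+A_{12}(x)x_{2}$ becomes asymptotically autonomous with limit system $\dot{y}=A_{1}(y,0)(y-x_{1}^{0})$, whose global asymptotic stability at $x_{1}^{0}$ is precisely hypothesis $H_{2}$. I would then invoke Thieme's theorem on asymptotically autonomous systems, applied on the compact attractor supplied by $H_{1}$, to conclude that every $\omega$-limit set is contained in the invariant limit set of the limiting flow, which is the singleton $\{x_{1}^{0}\}$. Combined with $x_{2}(t)\to 0$ this delivers global attractivity of $(x_{1}^{0},0)$, while local stability follows from the block-triangular Jacobian at $(x_{1}^{0},0)$: the diagonal blocks $A_{1}(x_{1}^{0},0)$ and $\bar{A}_{2}$ both have negative spectral abscissa by $H_{2}$ and $H_{5}$.

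The main obstacle will be the boundary analysis required to legitimise the asymptotic-autonomy step. Because $A_{1}$ and $A_{12}$ depend on the whole state $x$, one must rule out trajectories that approach $\{x_{2}=0\}$ only along heteroclinic chains between boundary equilibria other than $(x_{1}^{0},0)$. This is where the subtle part of $H_{4}$ matters: the strict inequality $A_{2}(x)\neq \bar{A}_{2}$ outside the disease-free sub-manifold, combined with the irreducibility from $H_{3}$, forces $V$ to be strictly decreasing off the boundary so that the $\omega$-limit set projects entirely into $\{x_{2}=0\}$. LaSalle's invariance principle applied on this invariant boundary, together with $H_{2}$, then reduces the $\omega$-limit set to $\{(x_{1}^{0},0)\}$ and closes the argument.
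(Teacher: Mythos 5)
The paper does not actually prove this statement: Appendix~C merely quotes the Kamgang--Sallet theorem as an external tool, with no argument supplied, so there is no in-paper proof to compare yours against. Your blind reconstruction is essentially the standard proof from Kamgang and Sallet's original article and is sound in outline: the entrywise bound $A_2(x)\le\bar A_2$ from $H_4$, the Metzler and irreducibility structure from $H_3$, and Perron--Frobenius give a left eigenvector $v^*\gg 0$ for which $\langle v^*,x_2(t)\rangle$ decays like $e^{\alpha(\bar A_2)t}$ (note this needs $x_2(t)\ge 0$, i.e.\ positive invariance of the nonnegative orthant, which you should state explicitly as part of $H_1$ and the epidemiological structure); the triangular form, $H_2$, and an asymptotically autonomous or cascade argument then finish the job. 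Two details deserve tightening. First, your local-stability step is slightly off: the lower diagonal block of the Jacobian at the disease-free equilibrium is $A_2(x_1^0,0)$, not $\bar A_2$ (repair this via monotonicity of the spectral abscissa of Metzler matrices, $\alpha(A_2(x_1^0,0))\le\alpha(\bar A_2)<0$), and $H_2$ only asserts global asymptotic stability of the $x_1$-subsystem, which does not imply that $A_1(x_1^0,0)$ is Hurwitz --- its linearization could have eigenvalues on the imaginary axis. Lyapunov stability of the full equilibrium is better obtained from a cascade-stability result (Vidyasagar; Seibert--Suarez) combined with the exponential decay of $x_2$, rather than from the Jacobian. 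Second, your closing concern about boundary heteroclinic chains is already neutralized by the exponential estimate on $\langle v^*,x_2\rangle$, which forces every $\omega$-limit set into $\{x_2=0\}$ independently of the strictness clause in $H_4$; what remains is exactly the asymptotic-autonomy step you describe.
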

\paragraph*{Appendix D: Theorem of \textbf{Castillo-Chavez and Song} }
\noindent

Herein, we present the results of \cite{castillo2004dynamical} that we use to investigate the bifurcation.

Consider the following general system of ordinary differential equations with a parameter $\alpha$:

\begin{equation}\label{f}
\dfrac{dZ}{dt}=f(Z,\alpha), \,\,\ f:\mathbb{R}^{n}\times \mathbb{R}\longrightarrow \mathbb{R} \,\,\ and \,\,\ f \in \mathcal{C}^{2}(\mathbb{R}^{n},\mathbb{R}),
\end{equation}
where 0 is an equilibrium point of the system (that is, f(0; $\alpha$) $\equiv 0$\,\,\ for all $\alpha$)\,\,\ and
assume

$A_{1}: \,\,\ A = D -xf(0, 0) =\dfrac{\partial f_{i}}{\partial x_{j}}(0,0)$ is the linearisation matrix of system (\ref{f}) around
the equilibrium 0 with a evaluated at 0. Zero is a simple eigenvalue of A and all other eigenvalues
of A have negative real parts.

$A_{2}$:\,\,\ Matrix A has a non-negative right eigenvector U and a left eigenvector V corresponding
to the zero eigenvalue. 

Let $f_{k}$ be the kth component of f and

\begin{eqnarray}
a_{1}&=&\displaystyle\sum_{i,j,k=1}^{n}v_{k}u_{i}u_{j}\dfrac{\partial^{2}f_{k}}{\partial x_{i}\partial x_{j}}(0,0)\label{4.3},\\
b_{1}&=&\displaystyle\sum_{i,k=1}^{n}v_{k}u_{i}\dfrac{\partial^{2}f_{k}}{\partial x_{i}\partial \alpha}(0,0).\label{4.4}
\end{eqnarray}

The local dynamics of model system (\ref{f}) around 0 are totally determined by $a_{1}$ and $b_{1}$

i. $ a_{1} > 0\,\,\ ; \,\,\ b_{1} > 0$: When $\alpha < 0$\,\,\ with\,\,\ $ \mid \alpha \mid \ll 1 $, 0 is locally asymptotically stable, and there
exists a positive unstable equilibrium; when $0 < \alpha \ll 1$, 0 is unstable and there exists a
negative and locally asymptotically stable equilibrium;

ii.  $a_{1} < 0\,\ ; \,\ b_{1} < 0$: When $\alpha < 0$\,\,\ with \,\,\ $\mid \alpha \mid \ll 1$, 0 is unstable; when $0 < \alpha \ll 1$, 0 is locally
asymptotically stable, and there exists a positive unstable equilibrium;

iii. $a_{1} > 0\,\ ; \,\  b_{1} < 0$: When $\alpha < 0$ \,\,\ with \,\,\ $\mid \alpha \mid  \ll 1 $, 0 is unstable, and there exists a locally
asymptotically stable negative equilibrium; when $0 < \alpha \ll 1$, 0 is stable, and a positive
unstable equilibrium appears;

iv. $a_{1} < 0 \,\ ; \,\ b_{1} > 0$: When $\alpha$ changes from negative to positive, 0 changes its stability from stable to unstable. Correspondingly a negative unstable equilibrium becomes positive and locally asymptotically stable.

%
\tolerance=1000
\emergencystretch=1em

  \end{document}